\newtheorem{Thm}{Theorem}
\newtheorem*{Thm*}{Theorem}
\newtheorem{Prop}{Proposition}	
\newtheorem{Lem}{Lemma}
\newtheorem{Cor}{Corollary}
\newtheorem{Def}{Definition}
\newtheorem{Rem}{Remark}
\newtheorem*{Thm_non-iid}{\rm\bf Theorem~\ref{Thm_non-iid}}
\newtheorem*{Thm_Sec}{\rm\bf Theorem~\ref{Thm_Sec-order}}
\newcommand{\SPAN}{\mathrm{SPAN}}
\newcommand{\COST}{\mathrm{COST}}
\newcommand{\WORST}{\mathrm{WORST}}
\newcommand{\AVE}{\mathrm{AVE}}
\newcommand{\argmax}{\mathrm{argmax}}
\newcommand{\argmin}{\mathrm{argmin}}
\newcommand{\OPT}{\mathrm{opt}}
\newcommand{\BARE}{\bar{E}(\mathcal{S}_m)}
\newcommand{\UBARE}{\underline{E}(\mathcal{S}_m)}
\newcommand{\PRJ}{P_{\mathcal{J}^{n}}}
\newcommand{\SCHE}{(\phi_n, S_n)}
\newcommand{\SCHEP}{(\phi^\OPT_n, S^\OPT_n)}
\newcommand{\COSTOPT}{\COST\SCHEP}
\newcommand{\E}{\mathbf{E}}
\title{A probabilistic analysis on general probabilistic scheduling problems}
\author{Daiki Suruga}
\begin{document}

\maketitle
\begin{abstract}
The scheduling problem is a key class of optimization problems 
and has various kinds of applications both in practical and theoretical scenarios.
In the scheduling problem, probabilistic analysis is a basic tool for investigating performance of scheduling algorithms, and therefore 
has been carried out by plenty amount of prior works.
However, probabilistic analysis has several potential problems. For example, current research interest in the scheduling problem is limited to i.i.d. scenarios, due to its simplicity for analysis.
This paper provides a new framework for probabilistic analysis in the scheduling problem and aims to deal with such problems.
As a consequence, we obtain several theorems including a theoretical limit of the scheduling problem which can be applied to \emph{general, non-i.i.d. probability distributions}.
Several information theoretic techniques, such as \emph{information-spectrum method}, turned out to be useful to prove our results.
Since the scheduling problem has relations to many other research fields, our framework hopefully yields other interesting applications in the future.
\end{abstract}

\section{Introduction}
\subsection{Background}\label{Sec_Back}
\paragraph{Scheduling problems}
The scheduling problem is a class of optimization problems  in which we want to allocate a collection of jobs on machines appropriately in order to minimize (or maximize) a certain cost function.
In the most fundamental form, the subjective is to minimize the \emph{makespan}, which is the total completion time of all machines which need to process all of the allocated jobs.
The scheduling problem has been introduced explicitly over 70 years ago~\cite{PS09}, and plays a central role in various kinds of research fields since then.

In this paper we specifically focus on the \emph{uniform-machines scheduling problem}, (which includes the \emph{identical-machines scheduling problem}~\cite{McN59} as a special case),
which is one of the most well-investigated problems among many variants of scheduling problems.
In the uniform-machines scheduling problem, there are $m$ machines, and a job $j \in J$ takes processing time $p_{i,j} = p_j/v_i$ on  machine $i$, 
i.e., the processing time is determined as $p_j$ divided by the speed of the $i$-th machine $v_i~(1 \leq i \leq m)$ .
\par
The uniform-machines scheduling problem has so many applications in many kinds of practical scenarios.
This scheduling problem appears in real-world situations, such as production lines, university, hospitals, and computer systems~(see \cite{CS90, Mok01} for good surveys).
This is partially why the uniform machines scheduling problem has been paid much attention.
In addition to the importance in real-world situations, the uniform-machines scheduling problem also provides several interesting aspects in theoretical computer science.
For example, this problem is known to be NP-hard~\cite{Kar72, BKL77, GJ78, Har82, JK23} in general and is NP-complete when there are only two machines.
Since NP-hardness and NP-completeness are the central concepts in complexity theory,
the uniform-machines scheduling problem has been paid so much attention even in complexity theory.
As the NP-hardness implies, computing the exact solution is usually really hard, and therefore
this scheduling problem has often been examined by approximating the optimal solution~\cite{Gra66, Gra69, HS87, AAWY98, GG19}.
This implies that the scheduling problem has been paid attention also in approximation algorithm society, 
showing another interesting aspect in theoretical computer science.
Due to practical importance as well as theoretical interest,
the uniform-machines scheduling problem has become one of the core branches in several research fields, 
including operations research and computer science.

\paragraph{Probabilistic analysis}%%%%%%%%%%%%%%%%%%%%%%%%%%%%%%%%%%%%%%%%%%
Today, there are many algorithms~(often called as~\emph{heuristics} or \emph{policies}) for the scheduling problem, such as
the list scheduling (LS) algorithm and the largest processing time (LPT).
In such cases, to select an appropriate scheduling algorithm for one's purpose, the one need to evaluate each algorithm with an appropriately chosen measure of performance.
One such candidate of the measure is the worst-case performance evaluation: For an scheduling algorithm $A$, the worst-case performance of the algorithm $A$ is given by $\WORST(A):=\max_{j^n \in J^n} A(j^n)$ where $A(j^n)$ denotes the makespan of the algorithm with input $j^n = (j_1, \ldots, j_n)$, which is a list of $n$ jobs.
This measure of performance gives a theoretical guarantee that any scheduling instance of the algorithm $A$ always has the makespan less than (or equal to) the quantity $\WORST(A)$.
This is one benefit of the worst-case analysis.
However the worst-case analysis has the crucial weak point: The makespan $A(j^n)$ of an input $j^n$ is often significantly shorter than the value $\WORST(A)$.
In other words, the hardest list of jobs:~$j^n_\mathrm{HARD} := \argmax A(j^n)$ may be really unlikely to happen in practice.
Today, there are several other measures of performance applied to scheduling algorithm, to overcome the weak point of the worst-case analysis.
\par
The average-case analysis, which is the other well-known method of the evaluation, can deal with the weak point of the worst-case analysis.
In the average-case analysis, a distribution $\Pr_{J^n}$ on $J^n$ is defined appropriately in order to reflect a practical situation, 
and performance of an algorithm $A$ is given as the expectation $\AVE(A, \Pr_{J^n}) := \E_{j^n \sim J^n}[A(j^n)]$ under the distribution $\Pr_{J^n}$.
As is easily seen, this analysis overcomes the weak point of the worst-case analysis, since jobs which occur with small probability affect the value $\AVE(A, \Pr_{J^n})$ not so much, even if makespans of the jobs are really large.
Therefore, the average-case analysis is usually more appropriate for practical situations and has been paid attention in the scheduling problem.
\par
As mentioned above, the average-case analysis has been paid attention over decades, 
and therefore there is a certain number of works that analyze average-case behaviors of scheduling algorithms~(for example, \cite{PR96, CCG+00, CFGK91, CJLS93, CFW96, BBC18, BCEH21}), 
including analysis on an optimal algorithm that always outputs an optimal schedule, a schedule having the shortest makespan.
 References~\cite{Kno81, Lue82, CFGK91, CCG+02, PR96} treat the average-case behaviors of \emph{optimal algorithms} for the identical-machines scheduling problem or related problems.
For example, 
Ref.~\cite{PR96} characterized the constant $\theta := \lim_{n \to \infty}\E[\mathrm{OPT}_n(j^n)]/n$ when a list of jobs $j^n$ is $n$-i.i.d. distributed where $\mathrm{OPT}_n$ denotes an optimal algorithm.
On the other hand, some prior works such as References~\cite{FK87, CJLS93, CFW96, Li99, BCEH21} treat the average-case behaviors of several \emph{practical algorithms}.
In the case of practical algorithms, 
for example, 
Ref.~\cite{FK87} showed that the LPT algorithm asymptotically becomes the optimal one. 
That is, the makespan by the LPT algorithm (almost surely) converges to that of the optimal algorithm as $n \to \infty$, when a list of jobs $j^n$ is $n$-i.i.d. distributed.
For recent results on the LPT algorithm, see~Ref.~\cite{BCEH21} for example.
For its practical importance as well as theoretical interests, the average-case analysis on the scheduling problem has been investigated in many prior works.
\par
Considering the usefulness of the average-case analysis, it may seem that the average-case analysis is the best way to evaluate performance of scheduling algorithms, and has no weak point at all.
However, the average-case analysis has at least two following issues.
First, the average-case analysis usually need to assume i.i.d. conditions on distributions $\Pr_{J^n}$, which should reflect a practical situation appropriately.
Indeed, to the best of our knowledge, almost all the previous works\footnote{Several exceptions~(e.g., \cite{EMP01, LNCH04} in real-time analysis literature) can be found, which are not closely related to this paper.} assume the i.i.d. conditions. However, in practice this is too restrictive; 
there are many practical situations where distributions of jobs are correlated.
Jobs may form a Markov chain.
Second, in contrast to the worst-case analysis, the expectation $\E_{j^n \sim J^n}[A(j^n)]$ does not guarantee the worst-case performance of an algorithm $A$. 
This means, the makespan $A(j^n)$ of some jobs $j^n$ may become really long even if the average $\E_{j^n \sim J^n}[A(j^n)]$ is relatively short.
Since the time for process is usually limited in practice, the schedule by the algorithm $A$ will not always work correctly and may cause some serious problems due to this unexpectedness.
These are the two issues of the average-case analysis, and 
one of our main focus in this paper is to deal with the two above problems by appropriately re-defining a measure of performance.

\subsection{Our contributions}%%%%%%%%%%%%%%%%%%%%%%%%%%%%%%%%%%%%%%%%%%
As mentioned in Section~\ref{Sec_Back}, one of the issues regarding the average-case analysis
is that current research focus only on the i.i.d. distributions, even though non-i.i.d. distributions will occur in many practical cases.
Therefore, in this paper, we focus on the asymptotic behavior of optimal scheduling algorithm under \emph{non-i.i.d., general probability distributions}.
\par
Our results can be divided into three parts.
\begin{description}
\item[\bf First Result:] The asymptotic limit of the expectation under general probability distributions in Section~\ref{Sec_average},
\item[\bf Second Result (main result):] The asymptotic limit of a new performance measure  under general probability distributions in Section~\ref{Sec_Char-of-opt},
\item[\bf Third Result:] Precise analysis on a new performance measure in case of the i.i.d. distributions in Section~\ref{Sec_Sec-order_asymptotics}.
\end{description}
In First Result,  we characterize the asymptotic limit of the expectation of the optimal algorithm under general probability distributions.
The main result of this paper is Second Result. Second Result aims to resolve both of the two issues mentioned in Section~\ref{Sec_Back}.
The two issues say that both the worst-case analysis and the average-case analysis may not appropriate for practical scenarios, 
and hence another performance measure needs to be introduced.
The new performance measure is then suitably defined to resolve such issues. Even though the new performance measure had not been introduced in scheduling theory,
similar measures have played as natural measures of performance in several research fields such as in Information theory.
Second Result is devoted to the asymptotic characterization of the new performance measure in general probability distributions, which is successfully done by applying the well-established technique called \emph{Information-spectrum method}.
Third Result provides more precise analysis, without relying on the asymptotic limit,  on the new measure of performance in case of i.i.d. distributions.
Let us explain each of the results separately in detail.

\subsubsection{First result}
Our first result examines the asymptotic limit of the expectation of the optimal algorithm:
\[
\limsup_{n \to \infty} \frac{1}{n}\E_{J^n}[\SPAN(\phi^\mathrm{opt}_n, J^n)],
\]
where $\SPAN(\phi_n^\mathrm{opt}, J^n)$ is the makespan of the optimal algorithm $\phi_n^\mathrm{opt}$ with input $J^n$.
Note that we need to use ``$\limsup$'' instead of ``$\lim$'' because of the generality of probability distributions.
In Section~\ref{Sec_average}, we characterize this quantity by another quantity which is expressed as the limit of the expectation of random variables.
(See Equation~\eqref{eq_Thm_expectation} for the explicit expression.)
Even though our characterization may seem difficult to compute, it becomes simpler in many important cases by applying techniques from probability theory.
For example,
a simplified proposition, 
Proposition~\ref{Prop_expectation_iid}, is obtained from the law of large numbers applied to i.i.d. scenarios,
and another simplified proposition, Proposition~\ref{Prop_expectation_Markov}, is obtained from the ergodic theorem applied to markov distributions.
Note that as a prior work, Ref.~\cite{PR96} shows the same result in case of the i.i.d. distributions. 
Compared to Ref.~\cite{PR96}, our result is proved in an arguably simpler way because only combinatorial approach is used to show Proposition~\ref{Prop_expectation_iid},
whereas Ref.~\cite{PR96} uses several other non-trivial techniques such as linear programming and its Lagrange relaxation.

\subsubsection{Second result} 
As mentioned, a new measure of performance is introduced in order to evaluate practical performance appropriately.
In short, our new measure discards a set $S_n$ of jobs that are \emph{unlikely to occur},
and evaluates the worst-case performance over the set $S^c_n$ of jobs which are likely to occur.\footnote{$A^c$ denotes the complement set of $A$.}
Formally, the new measure is defined as follows.
\begin{Def}
Let $S_n \subseteq J^n$ be a set of jobs unlikely to occur, i.e., $\Pr(S_n)$ is close to zero.
For a scheduling algorithm $\phi_n$ on $J^n$,
the new measure $\COST(\phi_n, S_n)$ is defined as
\begin{equation*}
\COST(\phi_n, S_n)
= \max_{j^n \in S_n^c} \phi_n(j^n)
\end{equation*}
where $\phi_n(j^n)$ is the makespan by the algorithm $\phi_n$ with an input $j^n$.
\end{Def}
This measure indeed captures the worst-case performance in a practical scenario,
while the average-case analysis fails to do so.
To see this, consider a scenario where a scheduling algorithm $\phi_n$ processes an input $j^n \in J^n$ only when the input belongs to the set $S^c_n$; otherwise the jobs $j^n$ are discarded and not scheduled to machines.
In this scenario, the value $\COST(\phi_n, S_n)$ obviously corresponds to the worst-case makespan produced by the algorithm $\phi_n$.
In our main contributions, Theorem~\ref{Thm_non-iid} and Theorem~\ref{Thm_Sec-order}, we successfully analyze fundamental behaviors of this new measure of performance.
In Theorem~\ref{Thm_non-iid}, we evaluate the limit: $\limsup_{n \to \infty} \frac{1}{n}\COST(\phi_n, S_n)$ for efficient algorithms\footnote{We often simply call a pair $(\phi_n, S_n)$ as an \emph{algorithm}.} $(\phi_n, S_n)$ 
whose discarding probability asymptotically vanishes, i.e.,  $\Pr_{J^n}(S_n) \to 0$ as $n \to \infty$.
More precisely, in Theorem~\ref{Thm_non-iid} we evaluate the smallest possible value of $\limsup_{n \to \infty} \frac{1}{n}\COST(\phi_n, S_n)$ when $\Pr_{J^n}(S_n)$ must satisfy $\Pr_{J^n}(S_n) \to 0$.
To state the result formally, we introduce the notion of \emph{rate} as follows:
\begin{Def}
The rate $R > 0$ is achievable if and only if there is  a (sequence of) scheduling algorithm(s): $\{(\phi_n, S_n)\}_n$
such that $\lim_{n \to \infty} \Pr_{J^n}(S_n) = 0$ and $\limsup_{n \to \infty}\frac{1}{n}\COST(\phi_n, S_n) \leq R$ holds.
\end{Def}
Our goal is then to evaluate the infimum of the achievable rate: 
$\inf\{R > 0 \mid \text{The rate $R$ is achievable.} \}$,
and we characterize this quantity as follows:
\begin{Thm}\label{Thm_non-iid}
For any $m$-uniform machines scheduling problem, 
\begin{equation*}
\inf\{R > 0 \mid \text{The rate $R$ is achievable.} \} = \bar{E}
\end{equation*}
 holds
where
$\bar{E} := \inf\{\alpha \geq 0 \mid \lim_{n \to \infty} P_{\mathcal{J}^n}\{j^n \mid \frac{1}{n \cdot v_\mathrm{sum}} \sum_{i \leq n}T(j_i) > \alpha\} = 0\}$,
$v_\mathrm{sum}$ is the sum of speeds of all machines and $T(j_i)$ denotes the processing time of a job $j_i$ at unit speed.
\end{Thm}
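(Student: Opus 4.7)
The plan is to prove the theorem by establishing matching converse and achievability bounds, exactly as in the information-spectrum paradigm alluded to in the introduction. The converse amounts to showing that no algorithm $(\phi_n, S_n)$ with $P_{\mathcal{J}^n}(S_n) \to 0$ can drive $\limsup_n \tfrac{1}{n}\COST(\phi_n, S_n)$ strictly below $\bar{E}$; it will follow from an elementary, pointwise work-conservation lower bound on the makespan. The achievability is witnessed by a natural choice of $S_n$: a spectrum threshold set corresponding to the quantile defining $\bar{E}$, together with a classical near-optimal uniform-machines schedule on its complement.

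For the converse, the key inequality is
\[
\phi_n(j^n) \;\geq\; \frac{1}{v_{\mathrm{sum}}}\sum_{i=1}^{n} T(j_i)
\qquad \text{for every schedule } \phi_n \text{ and every } j^n,
\]
because the total work executed by the $m$ machines equals $\sum_i T(j_i)$ while the last-finishing machine takes at least the work-weighted average time. Fix any $\alpha < \bar{E}$ and let $A_n := \{j^n : \tfrac{1}{n v_{\mathrm{sum}}} \sum_i T(j_i) > \alpha\}$. By the definition of $\bar{E}$, $P_{\mathcal{J}^n}(A_n) \not\to 0$, so along some infinite subsequence there exists $\delta > 0$ with $P_{\mathcal{J}^n}(A_n) \geq \delta$. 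If $(\phi_n, S_n)$ satisfies $P_{\mathcal{J}^n}(S_n) \to 0$, then $A_n \cap S_n^{c}$ is nonempty along that subsequence, and any element in it yields $\COST(\phi_n, S_n) \geq \phi_n(j^n) > \alpha n$; letting $\alpha \uparrow \bar{E}$ gives $\limsup_n \tfrac{1}{n} \COST(\phi_n, S_n) \geq \bar{E}$.

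For the achievability, fix $R > \bar{E}$, pick $\alpha \in (\bar{E}, R)$, and set
\[
S_n \;:=\; \Bigl\{ j^n : \tfrac{1}{n v_{\mathrm{sum}}} \sum_{i=1}^{n} T(j_i) > \alpha \Bigr\},
\]
so that $P_{\mathcal{J}^n}(S_n) \to 0$ by the definition of $\bar{E}$. On each $j^n \in S_n^{c}$ the total workload is at most $\alpha n v_{\mathrm{sum}}$, and a classical LPT/list-scheduling algorithm on uniform machines attains makespan bounded by $\sum_i T(j_i)/v_{\mathrm{sum}}$ plus an additive correction of order $\max_i T(j_i)/v_1$. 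Using the slack $R - \alpha > 0$, and enlarging $S_n$ if necessary by a vanishing-probability subset of inputs with atypically large single jobs, I would conclude $\phi_n(j^n) \leq R n$ for every $j^n \in S_n^{c}$ and all large $n$, so that $R$ is achievable.

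I expect the converse to be essentially routine, since the workload bound is pointwise and the intersection-of-events step is the standard information-spectrum recipe. The main obstacle will be the achievability: upgrading the average-workload bound on $S_n^{c}$ into an honest worst-case makespan guarantee is where the combinatorial content of uniform-machines scheduling enters nontrivially. The delicate step is selecting an approximation algorithm whose additive error is small enough, and shaping $S_n$ by a vanishing refinement so that the correction coming from the largest individual job is absorbed by the slack $R - \alpha > 0$.
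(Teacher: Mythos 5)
Your argument is essentially the paper's own: the converse is Lemma~\ref{Lem_sche_lower-bound} (work-conservation lower bound $\SPAN(\phi_n,j^n)\geq T_n(j^n)/v_{\mathrm{sum}}$) combined with the spectral definition of $\bar{E}$, and the achievability uses the threshold set $S_n=\{j^n:\frac{1}{n v_{\mathrm{sum}}}T_n(j^n)>\alpha\}$ together with a makespan upper bound of the form $T_n(j^n)/v_{\mathrm{sum}}+O(1)$, which is the paper's Lemma~\ref{Lem_sche_upper-bound}. The only thing to flag is that the ``main obstacle'' you anticipate in the achievability step does not exist in this model: since $\mathcal{J}$ is a fixed finite alphabet, $T_{\mathrm{max}}=\max_{j\in\mathcal{J}}T(j)$ is a uniform constant, so the additive correction $T_{\mathrm{max}}/v_{\mathrm{min}}$ is $\Theta(1)$ and vanishes after dividing by $n$; there is no need to enlarge $S_n$ to cut out atypically large single jobs, and the workload bound on $S_n^{c}$ already \emph{is} a worst-case bound on $S_n^{c}$ by construction of $S_n$.
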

Let us mention several remarks regarding Theorem~\ref{Thm_non-iid}.
First and foremost, Theorem~\ref{Thm_non-iid} assumes almost nothing,  and thus can be applied to any scheduling problem with any kind of distribution.
In particular, we can apply Theorem~\ref{Thm_non-iid} to scheduling problems with non-i.i.d. distributions such as Markov processes, ergodic processes.
In the case of Markov processes, for example, the value $\bar{E}$ turns out to be equal to $\frac{1}{v_\mathrm{sum}}\E_J[T(J)]$ where $\E_J[T(J)]$ denotes the expectation of the processing time $T(J)$ with respect to its stationary distribution.
Second, considering that the discarding probability $\Pr_{J^n}(S_n)$ must need to vanish asymptotically, 
one may expect that there is asymptotically no difference between the worst-case measure $\frac{1}{n}\WORST[\mathrm{OPT}_n]$ and our evaluation $\bar{E}$.
However, this is not true. Even in a simple i.i.d. scenario, we see that $\frac{1}{n}\WORST[\mathrm{OPT}_n] \to \frac{1}{v_\mathrm{sum}}T_{\mathrm{max}}$ where $T_\mathrm{max} := \max_j T(j)$, but $\bar{E} = \frac{1}{v_\mathrm{sum}}\E[T(J)]$.
This means that there is a certain amount of savings by admitting the negligible probability of discarding.
Lastly, for several important cases such as i.i.d. scenarios, the quantity $\bar{E}$ and the average case evaluation $\lim_{n \to \infty} \E_J[\mathrm{OPT}_n(J)]/n$ coincide.
\subsubsection{Third result}

Even though our first result successfully characterizes the asymptotic rate of any scheduling problem with any distribution of jobs,
the result only gives the optimal rate in the asymptotic setting.
Needless to say, there are many practical situations where one may want to estimate the optimal value of $\COST(\phi_n, S_n)$ for finite number of jobs, e.g., $n = 1000$.
To deal with this problem, in our third result, we employ more precise analysis on the value $\COST(\phi^\OPT_n, S^\OPT_n)$ 
where $(\phi_n^\OPT, S_n^\OPT)$ is an optimal scheduling (with discarding probability $\varepsilon$) defined as $(\phi_n^\OPT, S_n^\OPT) = \argmin_{\Pr(S_n) \leq \varepsilon} \COST(\phi_n, S_n)$,
and characterize this value accurately for any finite $n$, up to an additive constant factor.
Our third result can be applied to any kind of i.i.d. scheduling problem.
The formal statement of our third result is as follows.
\begin{Thm}\label{Thm_Sec-order}
For any i.i.d. scheduling problem, 
\begin{equation*}
\COST\SCHEP
= \frac{n}{v_\mathrm{sum}}\E[T(J)] - \frac{\sqrt{V(T(J))n}}{v_\mathrm{sum}}\Phi^{-1}(\varepsilon) + \Theta(1).
\end{equation*}
holds where $\Phi(x)$ is the standard Gaussian distribution and $V(T(J)) := \E_J[(T(J) - \E[T(J)])^2]$.
\end{Thm}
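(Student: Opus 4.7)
The plan is to reduce the theorem to a sharp quantile estimate for the sum $\sum_{i \leq n} T(J_i)$ and then invoke the Berry--Esseen theorem to pin that quantile down to within $O(1)$. The first ingredient is the scheduling fact that on $m$ uniform machines, every list $j^n$ satisfies
\[
\frac{1}{v_\mathrm{sum}}\sum_{i \leq n} T(j_i) \;\leq\; \phi_n^\OPT(j^n) \;\leq\; \frac{1}{v_\mathrm{sum}}\sum_{i \leq n} T(j_i) + C,
\]
where the lower bound is the total-work-over-total-speed inequality, while the upper bound (with $C$ depending only on $v_1, \dots, v_m$ and $T_\mathrm{max} := \max_{j} T(j)$) follows from an explicit schedule such as LPT. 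Since $T(J)$ is bounded in the i.i.d.\ setting under consideration, $\phi_n^\OPT(j^n)$ agrees with $\frac{1}{v_\mathrm{sum}}\sum_i T(j_i)$ up to an additive $O(1)$, uniformly in $j^n$.

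The sandwich immediately gives $\COST(\phi_n^\OPT, S_n) = \frac{1}{v_\mathrm{sum}} \max_{j^n \in S_n^c} \sum_i T(j_i) + O(1)$ for any $S_n$, so the optimisation over $S_n$ with $\Pr(S_n) \leq \varepsilon$ collapses to a one-dimensional quantile problem: the minimising set is a super-level set $S_n^\OPT = \{j^n : \sum_i T(j_i) > \tau_n\}$, where $\tau_n$ is the smallest threshold with $\Pr(\sum_i T(J_i) > \tau_n) \leq \varepsilon$. Hence $\COSTOPT = \tau_n / v_\mathrm{sum} + O(1)$, and the task reduces to showing $\tau_n = n\,\E[T(J)] + \sqrt{n V(T(J))}\,\Phi^{-1}(1-\varepsilon) + O(1)$, after which the identity $\Phi^{-1}(1-\varepsilon) = -\Phi^{-1}(\varepsilon)$ rewrites the expression in the form stated in the theorem. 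For this quantile, Berry--Esseen says that the CDF of the normalised sum $(\sum_i T(J_i) - n\E[T(J)])/\sqrt{nV(T(J))}$ differs from $\Phi$ by $O(1/\sqrt{n})$ uniformly; inverting around the level $1-\varepsilon$ and using that the normal density is bounded away from zero in a neighbourhood of $\Phi^{-1}(1-\varepsilon)$ (fine for fixed $\varepsilon \in (0,1)$) converts the $O(1/\sqrt{n})$ CDF error into the required $O(1)$ quantile error.

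The main obstacle is not the probabilistic step but the scheduling sandwich: the $O(T_\mathrm{max})$ overhead of LPT over the work-divided-by-speed bound is classical for \emph{identical} machines, and carrying it through to \emph{uniform} machines with explicit control of the dependence on the speeds $v_1,\dots,v_m$ requires care. A secondary subtlety is that the theorem asserts a $\Theta(1)$ rather than merely $O(1)$ additive term; matching the bound \emph{from below} as well as above requires tracking both the Berry--Esseen and the sandwich constants more carefully, but this introduces no essentially new difficulty beyond the arguments already in play.
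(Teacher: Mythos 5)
Your proposal follows the same route as the paper: sandwich the optimal makespan between $\frac{1}{v_\mathrm{sum}}T_n(j^n)$ and $\frac{1}{v_\mathrm{sum}}T_n(j^n) + O(1)$ (the paper's Lemmas~\ref{Lem_sche_upper-bound} and~\ref{Lem_sche_lower-bound}), reduce the optimization over $(\phi_n, S_n)$ to the quantile $R_n^+(\varepsilon)$ of $\frac{1}{n v_\mathrm{sum}}T_n(J^n)$, then apply Berry--Esseen and invert $\Phi$ near a fixed level to convert the $O(1/\sqrt n)$ CDF error into an $O(1)$ quantile error. The only minor deviation is that you invoke LPT for the additive upper bound and flag the uniform-machine extension as the main obstacle, whereas the paper's Lemma~\ref{Lem_sche_upper-bound} obtains it directly for the optimal scheduler by a simple exchange/contradiction argument with explicit constant $T_\mathrm{max}/v_\mathrm{min}$, avoiding that difficulty altogether.
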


\subsubsection{Our framework and Information theory}
Even though our results may seem no obvious relation to Information theory, they actually have something in common in spirit.
This section briefly describes the underlying connection between our framework and fixed length noiseless coding in Information theory.
\par
In fixed length noiseless coding scheme, a coding is defined as a pair $\phi_n = (\phi_n^\mathrm{Enc},\phi_n^\mathrm{Dec})$
where an encoder $\phi_n^\mathrm{Enc}: \mathcal{X}^n \to \{1, \ldots, M_n\}$  maps a list of $n$ symbols $(x_1, \ldots, x_n) \in \mathcal{X}^n$
to an integer $\phi_n^\mathrm{Enc}(x_1, \ldots, x_n) \in \{1, \ldots, M_n\}$, and a decoder $\phi_n^\mathrm{Dec}$ does the same reversely.
In this scheme, $n$ symbols in $\mathcal{X}^n$ are randomly generated, and the error probability is introduced as the probability of the event $S_n := \{x \in \mathcal{X}^n \mid \phi_n^\mathrm{Dec} \circ \phi_n^\mathrm{Enc}(x) \neq x\}$.

The seminal paper~\cite{Sha48} shows that the rate ``$\lim \frac{1}{n} \log M_n$'' of an optimal coding subject to $\Pr(S_n) \to 0$ satisfies
\begin{equation}\label{eq_Sha48}
\lim_{n \to \infty} \frac{1}{n}\log M_n = H(X)
\end{equation}
when the symbols are i.i.d.
($H(X)$ is the entropy of the randomly generated symbol on $\mathcal{X}$.)
This result characterizes the \emph{asymptotic} behavior of the optimal rate for \emph{i.i.d. symbols}.
Since the original result is limited to i.i.d. scenarios, Ref.~\cite{Han03} established a novel method, \emph{Information-Spectrum method}, to deal with non i.i.d. scenarios
and showed that the optimal rate is, instead of the entropy $H(X)$, characterized as
\begin{equation}\label{eq_Han03}
\bar{H}(X) := \inf\left\{\alpha \geq 0 \mid \lim_{n \to \infty}\Pr\{x^n \mid \frac{1}{n}\log\frac{1}{\Pr_{\mathcal{X}^n}(x^n)} > \alpha\} = 0\right\}
\end{equation}
for general probability distributions.
There is another way of generalizing the original result called \emph{second-order asymptotics} (e.g., \cite{Str62, Tan14}) which investigates non-asymptotic aspects of fixed length coding.
Without relying on the limit, Ref.~\cite{Str62} characterizes the optimal rate as
\begin{equation*}
\log M_n = n \cdot H(X) + \sqrt{nV(X)} \Phi^{-1}(\varepsilon) - \frac{1}{2}\log n + \Theta(1)
\end{equation*}
for i.i.d. symbols. ($V(X)$ is the variance of $\log \frac{1}{\Pr(X)}$ and $\varepsilon := \Pr(S_n)$.)

\par
There are several similarities observed between fixed length coding scheme and our framework for scheduling problem.
Job are randomly generated, the set $S_n$ represents the event of error, and, the notion of rate $\lim_{n\to \infty}\frac{1}{n} \COST(\phi_n, S_n)$ is defined in a similar fashion
to fixed length coding.
Due to such similarities, this paper successfully adapts the information-spectrum method to our framework and obtains Theorem~\ref{Thm_non-iid} which deals with non-i.i.d. scenarios.
This paper also adapts the second order asymptotic method and obtains Theorem~\ref{Thm_Sec-order}.
(We also obtain a result corresponding to the strong converse theorem, which is left to Appendix~\ref{Sec_converse}.)
Even though several modifications are necessary to adapt these techniques to our framework, our results certainly have some connection to Information theory.

\section{Preliminaries}
Throughout this paper, we focus on the $m$-machine scheduling problem which is defined as a tuple $\mathcal{S}_m = (\mathcal{J}, T, \mathcal{V}, \mathcal{P})$ where
\begin{itemize}
\item $\mathcal{J} = \{j_1, \ldots, j_n\}$ denotes a set of $n$-jobs,
\item $T: \mathcal{J} \to \mathbb{N}$ denotes a function whose output $T(j_i)$ represents processing time of a job $j_i$ at unit speed,
\item $\mathcal{V} = \{v_1, \ldots, v_m\} \subset \mathbb{R}_{> 0}$ denotes a set of speeds of machines, where $v_i$ is the speed of machine $i$.\footnote{A job $j_i$ is processed by a machine $k$ with time $T(j_i)/v_k$.}
\item $\mathcal{P} = \{P_{\mathcal{J}^n}\}_n$ denotes a sequence of probability distributions where each $P_{\mathcal{J}^n}$ is a probability distribution on the $n$-product set $\mathcal{J}^n$.
\end{itemize}
For a scheduling problem $\mathcal{S}_m$ and each $n \in \mathbb{N}$, a scheduling $(\phi_n, S_n)$ is defined as 
\begin{itemize}
\item a scheduler $\phi_n$ which takes $j^n \in \mathcal{J}^n$ as input and outputs an allocation of the jobs $j^n$ to $m$ machines,
\item a subset of jobs $S_n \subseteq \mathcal{J}^n$ corresponding to jobs which are discarded, not executed or just out of consideration.
Without loss of generality, $S_n$ is assumed to satisfy $S_n \neq \mathcal{J}^n$.
\end{itemize}
Note that this definition is reduced to the ordinary one when $S_n$ is taken to be empty.
\par
Let $\SPAN(\phi_n, j^n)$ denote the makespan of a scheduler $\phi_n$ on a job $j^n \in \mathcal{J}^n$.
The worst-case cost $\COST(\phi_n, S_n)$ of the scheduling $(\phi_n, S_n)$ is the maximum of the makespan $\SPAN(\phi_n, j^n)$ over all jobs $j^n \in \mathcal{J}^n \setminus S_n$.
\par
Let us next introduce several notations to state some of our results in a simpler way.

\begin{Def}
$\bar{E}(\mathcal{S}_m) := \inf\{\alpha \geq 0 \mid \lim_{n \to \infty} P_{\mathcal{J}^n}\{j^n \mid \frac{1}{n \cdot v_{\mathrm{sum}}}T_n(j^n) > \alpha\} = 0\}$
where $T_n(j^n) := \sum_{i_\leq n} T(j_i)$ is the sum of processing time over all jobs in the list $j^n$ and $v_\mathrm{sum} := \sum_{i \leq m} v_i$.
\end{Def}

To evaluate how $\frac{1}{n} \COST(\phi_n, S_n)$ grows, we define the notion of rate $R$ as follows.
\begin{Def}
A rate $R \in \mathbb{R}_{\geq 0}$ is achievable if and only if there is a sequence of scheduling $\{(\phi_n, S_n)\}_n$ such that
$\lim_{n \to \infty} P_{\mathcal{J}^n}(S_n) = 0$ and 
$\limsup_{n \to \infty} \frac{1}{n} \COST(\phi_n, S_n)\leq R$ hold.
\end{Def}
We then observe that a scheduling with smaller rate is regarded as more efficient,
and therefore, we are interested in the minimization of the rate. That is, 
\begin{Def}
$R(\mathcal{S}_m) := \inf\{R \mid \text{$R$ is achievable.}\}$
\end{Def}

\section{Lemmas}
Here we prove several basic or fundamental lemmas required to show our main theorems.
\begin{Lem}
For an optimal scheduling $(\phi^\mathrm{opt}_n, S_n)$, i.e., $\SPAN(\phi^\mathrm{opt}_n, j^n) = \min_{\phi_n'}\SPAN(\phi_n', j^n)$ for any $j^n \in \mathcal{J}^n$,
\begin{equation*}
\frac{1}{m}\left(1 - \frac{m}{n}\right) \frac{T_\mathrm{min}}{v_\mathrm{max}} 
\leq \frac{1}{n}\COST(\phi^\mathrm{opt}_n, S_n) 
\leq \frac{1}{m}\left(1 + \frac{m}{n}\right) \frac{T_\mathrm{max}}{v_\mathrm{min}}
\end{equation*}
holds where $T_\mathrm{min} := \min_{j \in \mathcal{J}} T(j)$, $T_\mathrm{max} := \max_{j \in \mathcal{J}} T(j)$,
$v_\mathrm{max} := \max_{i \leq m} v_m$ and $v_\mathrm{min} := \min_{i \leq m} v_m$.
\end{Lem}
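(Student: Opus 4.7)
The plan is to prove the two inequalities separately by elementary pigeonhole arguments, using only the definitions of $\SPAN$, $\COST$, and the uniform-machines model. The crucial observation is that $\COST(\phi_n^{\mathrm{opt}}, S_n) = \max_{j^n \in \mathcal{J}^n \setminus S_n} \SPAN(\phi_n^{\mathrm{opt}}, j^n)$, so the upper bound must hold for \emph{every} $j^n$, while the lower bound needs only a \emph{single} witness $j^n \in S_n^c$, which exists by the standing assumption $S_n \neq \mathcal{J}^n$.

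For the upper bound I would exhibit a specific feasible schedule $\phi_n^{\mathrm{bal}}$ that distributes the $n$ jobs round-robin across the $m$ machines, so that each machine receives at most $\lceil n/m \rceil$ jobs. Each such job contributes at most $T_\mathrm{max}/v_\mathrm{min}$ to the load of its machine (longest processing time on the slowest machine), so on any input $\SPAN(\phi_n^{\mathrm{bal}}, j^n) \leq \lceil n/m \rceil \cdot T_\mathrm{max}/v_\mathrm{min} \leq \frac{n+m}{m}\cdot \frac{T_\mathrm{max}}{v_\mathrm{min}}$. Since $\phi_n^{\mathrm{opt}}$ is optimal by hypothesis, $\SPAN(\phi_n^{\mathrm{opt}}, j^n)$ is no larger; taking the max over $j^n \in S_n^c$ and dividing by $n$ yields the right-hand inequality.

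For the lower bound I would argue that, by pigeonhole, in \emph{any} schedule on any $j^n$ at least one machine must receive $\geq \lceil n/m \rceil \geq n/m$ jobs, and the time spent on that machine is at least $(n/m)\cdot T_\mathrm{min}/v_\mathrm{max}$ (shortest processing time on the fastest machine). Hence $\SPAN(\phi_n^{\mathrm{opt}}, j^n) \geq (n/m)\cdot T_\mathrm{min}/v_\mathrm{max}$ universally in $j^n$. Fixing any $j^n_{\star} \in S_n^c$ gives $\COST(\phi_n^{\mathrm{opt}}, S_n) \geq \SPAN(\phi_n^{\mathrm{opt}}, j^n_{\star})$, so $\frac{1}{n}\COST(\phi_n^{\mathrm{opt}}, S_n) \geq \frac{1}{m}\cdot \frac{T_\mathrm{min}}{v_\mathrm{max}} \geq \frac{1}{m}\bigl(1 - \frac{m}{n}\bigr)\frac{T_\mathrm{min}}{v_\mathrm{max}}$, the last step being trivial since $1 - m/n \leq 1$.

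There is essentially no hard step here: this is a warm-up lemma that sandwiches $\frac{1}{n}\COST(\phi_n^{\mathrm{opt}}, S_n)$ between two positive constants depending only on $m$, $v_\mathrm{max}/v_\mathrm{min}$, and $T_\mathrm{max}/T_\mathrm{min}$, which will later justify dominated-convergence-style passages to the limit in the main theorems. The only point requiring a little care is making sure both sides of the bound remain correct whether or not $n$ is divisible by $m$; the mild slack represented by the factor $(1 - m/n)$ on the left simply reflects the crude uniform estimates $T(j) \geq T_\mathrm{min}$ and $v_i \leq v_\mathrm{max}$ in the pigeonhole step.
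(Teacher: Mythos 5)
Your proof is correct, and it rests on the same pigeonhole idea as the paper's, but the execution differs in two ways worth noting. For the lower bound, you argue universally: any schedule on any $j^n$ loads some machine with $\geq n/m$ jobs, and that machine alone forces $\SPAN(\phi^\mathrm{opt}_n,j^n)\geq (n/m)\,T_\mathrm{min}/v_\mathrm{max}$, so one witness $j^n_\star\in S_n^c$ (guaranteed to exist since $S_n\neq\mathcal{J}^n$) suffices; this actually gives the tighter bound $\frac{1}{m}\frac{T_\mathrm{min}}{v_\mathrm{max}}$, of which the stated $\frac{1}{m}(1-\frac{m}{n})\frac{T_\mathrm{min}}{v_\mathrm{max}}$ is a trivial weakening. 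The paper instead fixes the extremal input $j^n_\mathrm{min}$ with every $T(j_i)=T_\mathrm{min}$, observes it minimizes the optimal makespan over $\mathcal{J}^n$, and bounds that makespan by temporarily pretending all machines run at speed $v_\mathrm{max}$, which introduces the $\lceil n/m\rceil$ slack and hence the $(1-m/n)$ factor. For the upper bound, you construct an explicit round-robin scheduler $\phi_n^\mathrm{bal}$ with per-machine load at most $\lceil n/m\rceil\,T_\mathrm{max}/v_\mathrm{min}$ and dominate $\phi_n^\mathrm{opt}$ by it on \emph{every} input; the paper instead fixes the extremal input $j^n_\mathrm{max}$ with every $T(j_i)=T_\mathrm{max}$ and reasons about it directly (asserting an equality for $\SPAN(\phi^\mathrm{opt}_n,j^n_\mathrm{max})$ that really should be a $\leq$). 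Your route avoids having to argue that the all-$T_\mathrm{min}$ and all-$T_\mathrm{max}$ inputs realize the extreme makespans, and it avoids the homogeneous-speed detour, so it is arguably cleaner; what the paper's version buys is a more symmetric pair of explicit computations around the division $n=n'm+c$. Either way the lemma follows.
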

\begin{proof}
Suppose $j^n = (j_1, \ldots, j_n)$ satisfies $T(j_i) = T_\mathrm{min}$ for any $i \leq n$.
This list of jobs $j^n$ trivially yields the minimum makespan among all jobs in $\mathcal{J}^n$ when a scheduling is optimal.
We now evaluate $\SPAN(\phi^\mathrm{opt}_n, j^n)$. 
First, consider a scenario where each machine has the speed $v_\mathrm{max}$, faster than the actual one.
Then by an analogue of the pigeon hole principle, the optimal makespan becomes $(1_{\{c' \neq 0\}} + n') T_\mathrm{min}/v_\mathrm{max}$
where $n'$ and $c$ are two natural numbers satisfying $n = n' m + c~(0 \leq c < m)$.
Since the actual machines are slower, we have
\begin{equation*}
\SPAN(\phi^\mathrm{opt}_n, j^n) \geq (1_{\{c' \neq 0\}} + n') \frac{T_\mathrm{min}}{v_\mathrm{max}}
\end{equation*}
which leads to 
\begin{equation*}
\SPAN(\phi^\mathrm{opt}_n, j^n) \geq (1_{\{c' \neq 0\}} + n') \frac{T_\mathrm{min}}{v_\mathrm{max}}
\geq n' \frac{T_\mathrm{min}}{v_\mathrm{max}}
= \frac{n - c}{m}\frac{T_\mathrm{min}}{v_\mathrm{max}}
\geq \frac{n - m}{m}\frac{T_\mathrm{min}}{v_\mathrm{max}}
.
\end{equation*}
By dividing the both sides with $n$, we get 
\[
\frac{1}{m}\left(1 - \frac{m}{n}\right) \frac{T_\mathrm{min}}{v_\mathrm{max}} \leq \frac{1}{n}\COST(\phi^\mathrm{opt}_n, S_n).
\]
\par
On the other hand, suppose $j^n = (j_1, \ldots, j_n)$ satisfies $T(j_i) = T_\mathrm{max}$ for any $i \leq n$.
Then, in a similar manner, we have
\begin{equation*}
\SPAN(\phi^\mathrm{opt}_n, j^n) = (1_{\{c' \neq 0\}} + n') \frac{T_\mathrm{max}}{v_\mathrm{min}}
\leq (1 + n') \frac{T_\mathrm{max}}{v_\mathrm{min}}
= \frac{m + n - c}{m}\frac{T_\mathrm{max}}{v_\mathrm{min}}
\leq \frac{n + m}{m}\frac{T_\mathrm{max}}{v_\mathrm{min}}.
\end{equation*}
This yields 
\[
\frac{1}{n}\COST(\phi^\mathrm{opt}_n, S_n) 
\leq \frac{1}{m}\left(1 + \frac{m}{n}\right) \frac{T_\mathrm{max}}{v_\mathrm{min}}
\]
which completes proof.
\end{proof}
By defining a sequence $\{(\phi_n, S_n)\}$ as $\phi_n = \phi_n^\mathrm{opt}$ and $S_n = \emptyset$, and by taking the limit $\lim_{n \to \infty}$,
we obtain the following corollary. 
\begin{Cor}
The optimal rate $R(\mathcal{S}_m)$ belongs to the interval $[T_\mathrm{min}/(m \cdot v_\mathrm{max}), T_\mathrm{max}/(m\cdot v_\mathrm{min})]$.
\end{Cor}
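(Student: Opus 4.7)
The plan is to establish the two bounds on $R(\mathcal{S}_m)$ separately, using the preceding Lemma for the upper direction and an elementary load-balancing inequality for the lower direction.

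For the upper bound, I follow the hint and exhibit one concretely achievable rate. Take the sequence $(\phi_n, S_n) = (\phi^{\mathrm{opt}}_n, \emptyset)$, for which the discarding probability $P_{\mathcal{J}^n}(\emptyset) = 0$ trivially satisfies $\lim_{n} P_{\mathcal{J}^n}(S_n) = 0$. The right-hand inequality of the preceding Lemma gives
\[
\frac{1}{n}\COST(\phi^{\mathrm{opt}}_n, \emptyset) \;\leq\; \frac{1}{m}\Bigl(1 + \frac{m}{n}\Bigr)\frac{T_{\max}}{v_{\min}},
\]
whose limit as $n \to \infty$ is $T_{\max}/(m v_{\min})$. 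Hence this rate is achievable and $R(\mathcal{S}_m) \leq T_{\max}/(m v_{\min})$.

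For the lower bound I need a bound valid for \emph{every} admissible sequence, not just the particular one above. The key observation is uniform in the scheduler: if $L_1, \ldots, L_m$ are the loads that $\phi_n$ places on the $m$ machines for an input $j^n$, then the identity $\sum_k L_k v_k = \sum_i T(j_i)$ combined with $\max_k L_k \cdot v_{\mathrm{sum}} \geq \sum_k L_k v_k$ yields
\[
\SPAN(\phi_n, j^n) \;=\; \max_k L_k \;\geq\; \frac{\sum_i T(j_i)}{v_{\mathrm{sum}}} \;\geq\; \frac{n T_{\min}}{m v_{\max}},
\]
using $v_{\mathrm{sum}} \leq m v_{\max}$ in the last step. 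Since the definition of a scheduling forces $S_n \neq \mathcal{J}^n$, the set $\mathcal{J}^n \setminus S_n$ is nonempty, so $\COST(\phi_n, S_n) \geq n T_{\min}/(m v_{\max})$ for every admissible scheduling. Taking $\limsup_n$ shows every achievable rate satisfies $R \geq T_{\min}/(m v_{\max})$, whence $R(\mathcal{S}_m) \geq T_{\min}/(m v_{\max})$.

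The only real subtlety is the lower bound: the left inequality of the preceding Lemma cannot be reused directly here, because it was proved by analyzing the particular all-$T_{\min}$ list, which an adversarial admissible sequence could place inside $S_n$ and thereby remove from the maximum defining $\COST$. The uniform total-work-over-total-speed inequality bypasses this issue because it holds for every $\phi_n$ and every $j^n$. Everything else reduces to a one-line limit calculation on the explicit bounds already supplied by the preceding Lemma.
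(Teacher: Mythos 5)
Your proof is correct, and it fills in details that the paper leaves implicit (the paper's own justification is a one-line remark that only clearly covers the upper direction). The upper bound you give, via $(\phi^{\mathrm{opt}}_n, \emptyset)$ and the right-hand inequality of the Lemma, matches the paper's intent exactly. For the lower bound you replace the Lemma's left-hand inequality with the uniform load-balancing bound $\SPAN(\phi_n, j^n) \geq T_n(j^n)/v_{\mathrm{sum}} \geq n T_{\min}/(m v_{\max})$; this is valid (it is essentially Lemma~\ref{Lem_sche_lower-bound} plus $T(j_i) \geq T_{\min}$ and $v_{\mathrm{sum}} \leq m v_{\max}$), and in fact yields a slightly cleaner, $n$-independent bound. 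Its real advantage is that it applies to \emph{every} scheduler $\phi_n$ directly, whereas the Lemma's lower bound is stated only for $\phi^{\mathrm{opt}}_n$ and would otherwise require the extra observation $\COST(\phi_n, S_n) \geq \COST(\phi^{\mathrm{opt}}_n, S_n)$ before it could be applied to an arbitrary admissible sequence.

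However, the reason you give for distrusting the Lemma's lower bound is mistaken. You worry that an adversarial $S_n$ could absorb the all-$T_{\min}$ list and ``remove it from the maximum.'' But the all-$T_{\min}$ list is shown in the Lemma's proof to \emph{minimize} $\SPAN(\phi^{\mathrm{opt}}_n, \cdot)$ over all of $\mathcal{J}^n$. Consequently every $j^n \in \mathcal{J}^n \setminus S_n$, whatever $S_n$ is (as long as $\mathcal{J}^n \setminus S_n \neq \emptyset$), satisfies $\SPAN(\phi^{\mathrm{opt}}_n, j^n) \geq \SPAN(\phi^{\mathrm{opt}}_n, j^n_{\min}) \geq \frac{n-m}{m}\frac{T_{\min}}{v_{\max}}$, so the Lemma's left inequality already holds for arbitrary nonempty $S_n^c$ exactly as stated. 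The worker list is a floor on the minimum, not the witness in the maximum, so deleting it from $S_n^c$ costs nothing. The actual obstacle to directly citing the Lemma is the one noted above — that it speaks only of $\phi^{\mathrm{opt}}_n$ — not the one you identified.
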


\begin{Lem}\label{Lem_sche_upper-bound}
For any $n$, there is a scheduler $\phi_n$ such that, for any list of jobs $j^n = (j_1, \ldots, j_n)$,
\begin{equation}\label{eq_lem_sche_upper-bound}
 \SPAN(\phi_n, j^n) \leq \frac{1}{v_\mathrm{sum}} T_n(j^n) + \frac{T_\mathrm{max}}{v_\mathrm{min}}
\end{equation}
holds.
\end{Lem}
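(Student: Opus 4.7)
The plan is to take $\phi_n$ to be the standard list scheduling algorithm for uniform machines. Process the jobs $j_1,\ldots,j_n$ in the input order; for the currently arriving job $j$, assign it to a machine $k$ minimizing the would-be completion time $L_k + T(j)/v_k$, where $L_k$ denotes machine $k$'s current load (ties broken arbitrarily). Because all jobs are released at time zero, each machine processes its assigned jobs contiguously with no idle gaps, so at every stage and for every $k$ one has the identity $v_k L_k = $ (total processing work already assigned to machine $k$).

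Let $i^*$ be a machine attaining the final makespan, and let $j^*$ be the last job in the list order assigned to $i^*$. Writing $L_k^*$ for the load on machine $k$ at the instant just before $j^*$ is placed, we have $\SPAN(\phi_n, j^n) = L_{i^*}^* + T(j^*)/v_{i^*}$ (no further job is added to $i^*$ after $j^*$), and the greedy rule gives
\[
L_{i^*}^* + T(j^*)/v_{i^*} \;\le\; L_k^* + T(j^*)/v_k \qquad \text{for every } k.
\]
Multiplying through by $v_k$ and summing over $k$ yields
\[
v_\mathrm{sum}\cdot \SPAN(\phi_n, j^n) \;\le\; \sum_{k=1}^m v_k L_k^* + m\, T(j^*).
\]
By the ``no idle gaps'' identity, $\sum_k v_k L_k^*$ is exactly the total work of the jobs scheduled before $j^*$, and therefore at most $T_n(j^n) - T(j^*)$.

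Substituting gives $v_\mathrm{sum}\cdot\SPAN(\phi_n, j^n) \le T_n(j^n) + (m-1)\,T(j^*)$. Finally, from $v_\mathrm{sum} \ge m\cdot v_\mathrm{min}$ we get $(m-1)/v_\mathrm{sum} \le 1/v_\mathrm{min}$, hence $(m-1)T(j^*)/v_\mathrm{sum} \le T_\mathrm{max}/v_\mathrm{min}$, which is exactly \eqref{eq_lem_sche_upper-bound}. The main obstacle is essentially a bookkeeping one: one must pick the correct greedy criterion (balancing $L_k + T(j)/v_k$ rather than just $L_k$) so that summing the Graham-style inequality over the $m$ machines produces the desired ``total work divided by $v_\mathrm{sum}$'' leading term, and then verify that the ``load times speed equals work'' identity is valid under list scheduling so the residual is controlled by a single $T_\mathrm{max}/v_\mathrm{min}$ term rather than something that grows with $m$.
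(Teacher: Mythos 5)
Your proof is correct, but it takes a genuinely different route from the paper. The paper proves the bound for the \emph{optimal} scheduler $\phi_n^\OPT$ by contradiction: assuming $\SPAN(\phi_n^\OPT,j^n)$ exceeds $\frac{1}{v_\mathrm{sum}}T_n(j^n) + \frac{T_\mathrm{max}}{v_\mathrm{min}}$, it argues via a work-conservation count that some machine $i_0$ must go idle before time $T_n(j^n)/v_\mathrm{sum}$, and that some job starts after that time, so moving jobs to $i_0$ produces a strictly shorter schedule, contradicting optimality. You instead exhibit an explicit scheduler --- greedy list scheduling where each arriving job goes to the machine minimizing its would-be completion time $L_k + T(j)/v_k$ --- and run a Graham-style analysis: pick the machine $i^*$ realizing the makespan and the last job $j^*$ placed on it, use the greedy inequality $L_{i^*}^* + T(j^*)/v_{i^*} \le L_k^* + T(j^*)/v_k$, multiply by $v_k$ and sum over $k$, and use work-conservation $\sum_k v_k L_k^* \le T_n(j^n) - T(j^*)$ to get $v_\mathrm{sum}\cdot\SPAN \le T_n(j^n) + (m-1)T(j^*)$; the final $(m-1)/v_\mathrm{sum} \le 1/v_\mathrm{min}$ step (which follows from $v_\mathrm{sum}\ge m v_\mathrm{min}$) is correct and lands exactly on the stated additive constant. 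Your argument has a concrete advantage: the scheduler you produce runs in polynomial time, whereas the paper's $\phi_n^\OPT$ is NP-hard to compute, so your proof upgrades the lemma from a pure existence statement to a constructive, efficient one --- a nontrivial bonus given that Lemma~\ref{Lem_sche_upper-bound} is the scheduler actually invoked in the achievability direction of Theorem~\ref{Thm_non-iid}. One minor thing to make explicit in a final write-up: the case $n=0$ (or $j^*$ undefined because $i^*$ carries no job) is vacuous since then the makespan is zero, so $i^*$ and $j^*$ are always well-defined when the bound is nontrivial.
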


\begin{proof}
Take the optimal scheduler $\phi_n^\OPT$, i.e., $\SPAN(\phi^\textrm{opt}_n, j^n) = \min_{\phi_n'}\SPAN(\phi_n', j^n)$ for any $j^n \in \mathcal{J}^n$.
We prove Equation~(\ref{eq_lem_sche_upper-bound}) by assuming the opposite inequality $\SPAN(\phi^\OPT_n, j^n) > \frac{1}{v_\mathrm{sum}} T_n(j^n) + \frac{T_\mathrm{max}}{v_\mathrm{sum}}$ and giving a contradiction.
\par
Suppose $\SPAN(\phi^\OPT_n, j^n) > \frac{1}{v_\mathrm{sum}} T_n(j^n) + \frac{T_\mathrm{max}}{v_\mathrm{min}}$ holds. 
In this case, there needs to be one or more machines which finish all of the allocated jobs before the time $T_n(j^n)/v_\mathrm{sum}$ with the scheduling $\phi^\OPT_n$ on the jobs $j^n$.
To see this, let us calculate the maximum amount of jobs that the machines are able to process within the time $T_n(j^n)/v_\mathrm{sum}$.
The machine 1 is able to process at most $v_1 \cdot T_n(j^n)/v_\mathrm{sum}$ amount of jobs within the time $T_n(j^n)/v_\mathrm{sum}$, which is followed by the definition of the speed.
Applying this idea for all $i\leq m$, we see that, for any $i \leq m$ ,
each  machine $i$ is able to process at most $v_i \cdot T_n(j^n)/v_\mathrm{sum}$ amount of jobs within the time $T_n(j^n)/v_\mathrm{sum}$.
This implies the amount of jobs processed by all of the machines with the time $T_n(j^n)/v_\mathrm{sum}$ is 
\begin{equation}\label{eq_sum_jobs}
v_1 \frac{T_n(j^n)}{v_\mathrm{sum}}
+
v_2 \frac{T_n(j^n)}{v_\mathrm{sum}}
+
\cdots
+
v_m \frac{T_n(j^n)}{v_\mathrm{sum}}
=
T_n(j^n)
\end{equation}
which equals to the entire amount of jobs the machines needed to process, if the jobs are fully assigned.
By the assumption of $\SPAN(\phi^\OPT_n, j^n)$ which is strictly larger than the time $T_n(j^n)/v_\mathrm{sum}$, 
we therefore see that there is one or more machines which finish all of the allocated jobs before the time $T_n(j^n)/v_\mathrm{sum}$.
We label one such machines by $i_0 \in \{1, \ldots, m\}$.

\par
The assumption, $\SPAN(\phi^\OPT_n, j^n) > \frac{1}{v_\mathrm{sum}} T_n(j^n) + \frac{T_\mathrm{max}}{v_\mathrm{min}}$,
also means that there exists one or more jobs whose processing starts later than $\frac{1}{v_\mathrm{sum}} T_n(j^n)$ and finishes exactly at $\SPAN(\phi^\OPT_n, j^n)$
since $\SPAN(\phi^\OPT_n, j^n) - \frac{1}{v_\mathrm{sum}} T_n(j^n)> T_\mathrm{max}/v_\mathrm{min}$ holds and any job processed by any machine requires processing time less than or equal to $T_\mathrm{max}/v_\mathrm{min}$.
Then, the job can be re-assigned to the machine $i_0$, which finishes all of it's assigned jobs earlier than $\frac{1}{v_\mathrm{sum}} T_n(j^n)$.
After re-assigning the job, we recursively re-assign other jobs whose processing finishes exactly at $\SPAN(\phi^\OPT_n, j^n)$ in the same manner until all the jobs have been re-assigned to such machines.
This new scheduling has the makespan strictly less than $\SPAN(\phi^\OPT_n, j^n)$. This contradicts the fact that $\phi^\OPT_n$ is the optimal scheduling and therefore obtain
\begin{equation*}
 \SPAN(\phi^\OPT_n, j^n) \leq \frac{1}{v_\mathrm{sum}} T_n(j^n) + \frac{T_\mathrm{max}}{v_\mathrm{min}}
\end{equation*}
which completes the proof.
\end{proof}
 
\begin{Lem}\label{Lem_sche_lower-bound}
For any $n$, any scheduler $\phi_n$ and any list of jobs $j^n = (j_1, \ldots, j_n)$,
\begin{equation}\label{eq_lem_sche_lower-bound}
 \frac{1}{v_\mathrm{sum}}T_n(j^n) \leq \SPAN(\phi_n, j^n)
\end{equation}
holds.
\end{Lem}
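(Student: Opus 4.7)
The plan is to use a simple work-capacity argument. The quantity $T_n(j^n) = \sum_{i \leq n} T(j_i)$ represents the total workload (in unit-speed time units) that must be dispatched across the $m$ machines, while $v_{\mathrm{sum}} \cdot \SPAN(\phi_n, j^n)$ represents an upper bound on the total amount of work the machines can collectively process within the makespan. Setting up this inequality directly yields the claim.

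More precisely, I would let $t := \SPAN(\phi_n, j^n)$ and, for each machine $k \in \{1, \ldots, m\}$, denote by $W_k$ the sum of the unit-speed processing times $T(j_i)$ of the jobs that $\phi_n$ assigns to machine $k$. Since every job assigned to machine $k$ must be completed by time $t$, and machine $k$ processes such jobs at speed $v_k$, the total processing time spent on machine $k$ is $W_k / v_k$, and this cannot exceed $t$. Hence $W_k \leq v_k \cdot t$ for every $k$.

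Summing over $k$ gives
\begin{equation*}
T_n(j^n) \;=\; \sum_{k=1}^m W_k \;\leq\; \sum_{k=1}^m v_k \cdot t \;=\; v_{\mathrm{sum}} \cdot \SPAN(\phi_n, j^n),
\end{equation*}
which rearranges to the desired inequality $\frac{1}{v_{\mathrm{sum}}} T_n(j^n) \leq \SPAN(\phi_n, j^n)$.

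There is essentially no obstacle here: the only point requiring a brief justification is that each job assigned to machine $k$ must indeed finish by time $t$, which is immediate from the definition of the makespan as the total completion time over all machines. The argument is in fact the matching lower-bound companion to Lemma~\ref{Lem_sche_upper-bound}, which established the upper bound $\frac{1}{v_{\mathrm{sum}}} T_n(j^n) + T_{\mathrm{max}}/v_{\mathrm{min}}$ for the optimal scheduler; together the two lemmas pin down the optimal makespan up to an additive $O(1)$ term depending only on $T_{\mathrm{max}}$ and $v_{\mathrm{min}}$.
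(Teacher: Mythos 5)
Your proof is correct and uses essentially the same work-capacity idea as the paper: the paper phrases it as a contradiction (if the makespan were shorter, the machines could not have processed all $T_n(j^n)$ units of work), while you state it directly by bounding each machine's load $W_k \leq v_k t$ and summing, but the two are the same argument up to contraposition. Your version is, if anything, slightly cleaner since it applies directly to an arbitrary scheduler without first reducing to the optimal one.
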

\begin{proof}
Take the optimal scheduler $\phi_n^\OPT$, i.e., $\SPAN(\phi^\textrm{opt}_n, j^n) = \min_{\phi_n'}\SPAN(\phi_n', j^n)$ for any $j^n \in \mathcal{J}^n$.
We then only need to show $\frac{1}{v_\mathrm{sum}}T_n(j^n) \leq \SPAN(\phi^\OPT_n, j^n)$.
Suppose the opposite direction $\frac{1}{v_\mathrm{sum}}T_n(j^n) > \SPAN(\phi^\OPT_n, j^n)$ holds.
This means that every machine completes the assigned jobs earlier than the time $\frac{1}{v_\mathrm{sum}}T_n(j^n)$.
Therefore, the sum of the amount of jobs processed by all machines is less than $T_n(j^n)$ which follows from the same argument as in Equation~\ref{eq_sum_jobs}.
This leads to a contradiction since the sum of the amount of jobs for all machines must be equal to $T_n(j^n)$, as desired.
\end{proof}

\section{The average case analysis}\label{Sec_average}

From Lemma~\ref{Lem_sche_upper-bound} and Lemma~\ref{Lem_sche_lower-bound}, we have
\begin{equation*}
\frac{1}{v_\mathrm{sum}}T_n(j^n) \leq \SPAN(\phi_n, j^n) \leq \frac{1}{v_\mathrm{sum}}T_n(j^n) + \frac{T_\mathrm{max}}{v_\mathrm{min}}.
\end{equation*}
Divide each of them by $n$ and take $\limsup_{n \to \infty}$ yields the following equation~\eqref{eq_Thm_expectation}.

\begin{equation}\label{eq_Thm_expectation}
\limsup_{n \to \infty} \frac{1}{n}\E_{J^n}[\SPAN(\phi^\mathrm{opt}_n, J^n)]
= \frac{1}{v_\mathrm{sum}}\limsup_{n \to \infty} \frac{1}{n}\E_{J^n}[T_n(J^n)]
\end{equation}
for any scheduling problem $\mathcal{S}_m$.
Using the same argument, we obtain several propositions for important classes of probability distributions.

\paragraph{The i.i.d. scenario:}
Suppose $\mathcal{P} = \{P_{\mathcal{J}^n}\}$ is a sequence of i.i.d. distributions of $P_\mathcal{J}$.
Then, by the law of large numbers on the random variable $\frac{1}{v_\mathrm{sum}}T(J)$, we have the following corollary.
\begin{Prop}\label{Prop_expectation_iid}
For any i.i.d. scheduling problem, 
\begin{equation*}\label{eq_Thm_expectation_iid}
\lim_{n \to \infty} \frac{1}{n}\E_{J^n}[\SPAN(\phi^\mathrm{opt}_n, J^n)]
= \frac{1}{v_\mathrm{sum}} \E_{J}[T(J)].
\end{equation*}
\end{Prop}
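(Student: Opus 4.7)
The plan is to derive Proposition~\ref{Prop_expectation_iid} directly from the sandwich inequality obtained from Lemma~\ref{Lem_sche_upper-bound} and Lemma~\ref{Lem_sche_lower-bound}, combined with the i.i.d.\ assumption on $\mathcal{P}$. Concretely, for every $j^n \in \mathcal{J}^n$ these two lemmas give
\begin{equation*}
\frac{1}{v_\mathrm{sum}}T_n(j^n) \;\leq\; \SPAN(\phi^\mathrm{opt}_n, j^n) \;\leq\; \frac{1}{v_\mathrm{sum}}T_n(j^n) + \frac{T_\mathrm{max}}{v_\mathrm{min}},
\end{equation*}
so the optimal makespan is essentially the scaled total processing time, up to an additive constant that is independent of $n$.

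I would then take expectation with respect to $J^n \sim P_{\mathcal{J}^n}$ on all three sides and divide by $n$, obtaining
\begin{equation*}
\frac{1}{n\, v_\mathrm{sum}}\E_{J^n}[T_n(J^n)] \;\leq\; \frac{1}{n}\E_{J^n}[\SPAN(\phi^\mathrm{opt}_n, J^n)] \;\leq\; \frac{1}{n\, v_\mathrm{sum}}\E_{J^n}[T_n(J^n)] + \frac{T_\mathrm{max}}{n\, v_\mathrm{min}}.
\end{equation*}
The next step uses the i.i.d.\ hypothesis on $\mathcal{P}$: by linearity of expectation,
\begin{equation*}
\E_{J^n}[T_n(J^n)] \;=\; \sum_{i=1}^{n}\E[T(J_i)] \;=\; n\,\E_J[T(J)],
\end{equation*}
so the lower bound is exactly $\E_J[T(J)]/v_\mathrm{sum}$ for every $n$, and the upper bound exceeds this by only $T_\mathrm{max}/(n\,v_\mathrm{min})$, which vanishes as $n \to \infty$ because $\mathcal{J}$ is finite and hence $T_\mathrm{max}<\infty$.

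Applying the squeeze theorem then turns the $\limsup$ appearing in the general equation~\eqref{eq_Thm_expectation} into an honest limit and delivers the claimed identity. There is essentially no serious obstacle here: the two lemmas already did the combinatorial work by reducing $\SPAN(\phi^\mathrm{opt}_n,\cdot)$ to the pathwise sum $T_n(j^n)/v_\mathrm{sum}$ up to a bounded slack, and under i.i.d.\ sampling this sum has a trivial mean. The only small point worth stating carefully is that one does not actually need the law of large numbers for this particular proposition, since the statement concerns an expectation rather than an almost-sure or in-probability limit; linearity of expectation suffices. (The law of large numbers will, however, be the right tool when passing to the corresponding statement about $\bar{E}(\mathcal{S}_m)$ in Theorem~\ref{Thm_non-iid}.)
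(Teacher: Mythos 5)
Your proof is correct and follows essentially the same route as the paper: sandwich $\SPAN(\phi^\mathrm{opt}_n,\cdot)$ between $T_n(\cdot)/v_\mathrm{sum}$ and $T_n(\cdot)/v_\mathrm{sum}+T_\mathrm{max}/v_\mathrm{min}$ via Lemma~\ref{Lem_sche_upper-bound} and Lemma~\ref{Lem_sche_lower-bound}, take expectations, divide by $n$, and let $n\to\infty$. Your only (valid) refinement is the observation that the paper's appeal to the law of large numbers is unnecessary here: since the claim is about $\E_{J^n}[T_n(J^n)]$, linearity of expectation already gives $\frac{1}{n}\E_{J^n}[T_n(J^n)] = \E_J[T(J)]$ exactly for every $n$, so no probabilistic limit theorem is required and the $\limsup$ in Equation~\eqref{eq_Thm_expectation} collapses to an honest limit by the squeeze argument. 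The LLN is the right tool only for the later $\bar{E}(\mathcal{S}_m)$ characterization, as you note.
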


\paragraph{Markov distributions:}
Suppose the distribution $\PRJ$ is Markovian, i.e., 
\begin{equation*}
\PRJ(j^n) = P(j_n|j_{n -1})P(j_{n-1}|j_{n -2})\cdots P(j_2|j_{1})P(j_1).
\end{equation*}
Then, we see $\frac{1}{nv_\mathrm{sum}}T_n(j^n) \overset{a.s.}{\to} \frac{1}{v_\mathrm{sum}}\E_J[T(J)]$ by the ergodic theorem.
Considering the fact that almost sure convergence implies convergence in probability, we then obtain the following corollary:
\begin{Prop}\label{Prop_expectation_Markov}
For any Markov scheduling problem, 
\begin{equation*}\label{eq_Thm_expectation_Markov}
\lim_{n \to \infty} \frac{1}{n}\E_{J^n}[\SPAN(\phi^\mathrm{opt}_n, J^n)]
= \frac{1}{v_\mathrm{sum}} \E_{J}[T(J)].
\end{equation*}
\end{Prop}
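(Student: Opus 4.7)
The plan is to combine the sandwich estimate already available from Lemma~\ref{Lem_sche_upper-bound} and Lemma~\ref{Lem_sche_lower-bound} with the ergodic theorem applied to the induced Markov chain $\{T(J_i)\}_i$. The work consists of (i) passing from the deterministic sandwich to an expectation-level sandwich, (ii) invoking ergodicity to pin down the limit of $\tfrac{1}{n}T_n(J^n)$, and (iii) promoting the resulting almost-sure (or in-probability) convergence to convergence of expectations.

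First, I would apply Lemma~\ref{Lem_sche_upper-bound} and Lemma~\ref{Lem_sche_lower-bound} pointwise to every realization $j^n \in \mathcal{J}^n$, take expectations under $P_{\mathcal{J}^n}$, and divide by $n$, obtaining
\begin{equation*}
\frac{1}{n v_\mathrm{sum}}\E_{J^n}[T_n(J^n)]
\;\leq\; \frac{1}{n}\E_{J^n}[\SPAN(\phi^\mathrm{opt}_n, J^n)]
\;\leq\; \frac{1}{n v_\mathrm{sum}}\E_{J^n}[T_n(J^n)] + \frac{T_\mathrm{max}}{n\, v_\mathrm{min}}.
\end{equation*}
The additive term $T_\mathrm{max}/(n v_\mathrm{min})$ vanishes as $n \to \infty$, so the proposition reduces to showing that $\tfrac{1}{n}\E_{J^n}[T_n(J^n)] \to \E_J[T(J)]$ under the stationary distribution of the Markov chain.

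Next, I would invoke the ergodic theorem for the Markov chain on the finite state space $\mathcal{J}$ to conclude that $\tfrac{1}{n}\sum_{i=1}^n T(J_i) \to \E_J[T(J)]$ almost surely, which is exactly the statement quoted just above the proposition. Since $\mathcal{J}$ is finite, $T$ is bounded by $T_\mathrm{max}$, so the random variables $\tfrac{1}{n}T_n(J^n)$ are uniformly bounded; the bounded convergence theorem then transfers almost-sure convergence to convergence of expectations, giving $\tfrac{1}{n v_\mathrm{sum}}\E_{J^n}[T_n(J^n)] \to \tfrac{1}{v_\mathrm{sum}}\E_J[T(J)]$. Plugging this into both sides of the sandwich closes the argument.

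The only subtle point, and the place I would be most careful, is the promotion from the convergence-in-probability statement advertised in the paragraph preceding the proposition to the convergence of expectations that the statement actually requires; one needs either to note the uniform bound $\tfrac{1}{n}T_n(J^n) \leq T_\mathrm{max}$ and use dominated/bounded convergence, or to appeal directly to the $L^1$ form of the ergodic theorem. A minor caveat worth stating explicitly is that the ergodic theorem here presumes the Markov chain is (say) irreducible with stationary distribution $\pi$, so that $\E_J[T(J)]$ unambiguously denotes $\sum_{j \in \mathcal{J}} \pi(j) T(j)$; with that convention the two applications of the sandwich give matching upper and lower limits and hence the claimed equality.
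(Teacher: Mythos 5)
Your proof is correct and follows essentially the same route as the paper: sandwich $\SPAN(\phi^\mathrm{opt}_n, J^n)$ between $\tfrac{1}{v_\mathrm{sum}}T_n(J^n)$ and $\tfrac{1}{v_\mathrm{sum}}T_n(J^n) + T_\mathrm{max}/v_\mathrm{min}$ via Lemma~\ref{Lem_sche_upper-bound} and Lemma~\ref{Lem_sche_lower-bound}, take expectations, and identify $\lim \tfrac{1}{n}\E[T_n(J^n)]$ by the ergodic theorem. Where you improve on the paper is in the passage from almost-sure convergence of $\tfrac{1}{n}T_n(J^n)$ to convergence of its expectation: the paper only remarks that a.s.\ convergence implies convergence in probability (a remark really aimed at the $\bar{E}(\mathcal{S}_m)$ computation in Section~\ref{subsec_app_Char}, not at the expectation), whereas you explicitly supply the uniform bound $\tfrac{1}{n}T_n(J^n) \leq T_\mathrm{max}$ and invoke bounded convergence (or the $L^1$ ergodic theorem). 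You also correctly flag the implicit ergodicity/irreducibility hypothesis needed for the ergodic theorem to single out a unique stationary $\pi$; the paper leaves this tacit as well.
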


\section{Characterization of optimal rate}\label{Sec_Char-of-opt}
This section is devoted to prove the following theorem.
\begin{Thm_non-iid}
For any scheduling problem $\mathcal{S}_m$, $\bar{E}(\mathcal{S}_m) = R(\mathcal{S}_m)$ holds.
\end{Thm_non-iid}
We also give several important applications of Theorem~\ref{Thm_non-iid} in Section~\ref{subsec_app_Char}.

\subsection{Proof of Theorem~\ref{Thm_non-iid}}\label{subsec_proof_Char}%%%%%%%%%%%%%%%%%%%%%%%%%%%%%%%%%%%%%%%%%%
Using Lemma~\ref{Lem_sche_upper-bound} and Lemma~\ref{Lem_sche_lower-bound}, we show Theorem~\ref{Thm_non-iid} as follows.
\begin{proof}[Proof of Theorem~\ref{Thm_non-iid}]
We show $\bar{E}(\mathcal{S}_m) \geq R(\mathcal{S}_m)$ and $\bar{E}(\mathcal{S}_m) \leq R(\mathcal{S}_m)$ separately.
\par
[Proof of $\bar{E}(\mathcal{S}_m) \geq R(\mathcal{S}_m)$]: For any $\gamma > 0$, we show that the rate $\bar{E}(\mathcal{S}_m) + \gamma$ is achievable by constructing an appropriate sequence of scheduling $\{(\phi_n, S_n)\}$.
\par
By the definition of $\BARE$, $\lim_{n \to \infty} P_{\mathcal{J}^n} \{j^n \mid \frac{1}{n \cdot v_\mathrm{sum}}T_n(j^n) > \BARE + \gamma\} = 0$.
Define $S_n := \{j^n \mid \frac{1}{n \cdot v_\mathrm{sum}}T_n(j^n) > \BARE + \gamma\}$ and $\phi_n$ as a scheduler in Lemma~\ref{Lem_sche_upper-bound}, 
we now see $\lim_{n \to \infty} P_{\mathcal{J}^n} \{j^n \mid \frac{1}{n \cdot v_\mathrm{sum}}T_n(j^n) > \BARE + \gamma\} = 0$.
Therefore, the rest is to show 
\begin{equation*}
\limsup_{n \to \infty} \frac{1}{n} \COST(\phi_n, S_n)\leq \BARE + \gamma.
\end{equation*}
\par
First, by Lemma~\ref{Lem_sche_upper-bound}, 
\begin{equation}\label{eq_SPAN_lemma_up}
\SPAN(\phi^\OPT_n, j^n) \leq \frac{1}{v_\mathrm{sum}} T_n(j^n) + \frac{T_\mathrm{max}}{v_\mathrm{min}} 
\end{equation}
holds for any $j^n \in \mathcal{J}^n$.
We also observe that for any $j^n \in \mathcal{J}^n \setminus S_n$, $\frac{1}{v_\mathrm{sum}} T_n(j^n) \leq n (\BARE + \gamma)$ holds by the definition of $S_n$.
Therefore, taking $\max_{j^n \in \mathcal{J}^n \setminus S_n}$ on both sides of Equation~\eqref{eq_SPAN_lemma_up} yields
\begin{equation*}
\COST(\phi_n, S_n) := \max_{j^n \in \mathcal{J}^n \setminus S_n} \SPAN(\phi_n, j^n) 
\leq \max_{j^n \in \mathcal{J}^n \setminus S_n}\frac{1}{v_\mathrm{sum}} T_n(j^n) + \frac{T_\mathrm{max}}{v_\mathrm{min}}
\leq n (\BARE + \gamma) + \frac{T_\mathrm{max}}{v_\mathrm{min}}.
\end{equation*}
Therefore,  taking $\limsup_{n \to \infty}$ yields
\begin{equation*}
\limsup_{n \to \infty} \frac{1}{n}\COST(\phi_n, S_n) \leq \BARE + \gamma
\end{equation*}
which completes proof.
\par
[Proof of $\bar{E}(\mathcal{S}_m) \leq R(\mathcal{S}_m)$]:
To prove this, assume $R(\mathcal{S}_m) < \BARE$ and derive a contradiction.
Let $\gamma > 0$ be a small real number and suppose $\BARE - 2 \gamma$ is achievable. That is, there is a sequence $\{(\phi_n, S_n)\}$ such that
\begin{equation*}
\lim_{n \to \infty} P_{\mathcal{J}^n}(S_n) = 0 \text{~and~}
\limsup_{n \to \infty} \frac{1}{n} \COST(\phi_n, S_n)\leq \BARE -2 \gamma
\end{equation*}
hold. This implies that there is $n_0 \in \mathbb{N}$ such that for any $n \geq n_0$, 
\begin{equation*}
\frac{1}{n} \COST(\phi_n, S_n)\leq \BARE - \gamma
\end{equation*}
holds. Therefore, by the definition of $\COST(\phi_n, S_n)$, we have
\begin{equation*}
\forall n \geq n_0, \quad \forall j^n \in \mathcal{J}^n \setminus S_n, \quad \SPAN(\phi_n, j^n) \leq n (\BARE - \gamma)
\end{equation*}
which implies that, for any $n \geq n_0$ and any $j^n \in \mathcal{J}^n$,
\begin{equation*}
\SPAN(\phi_n, j^n) > n(\BARE - \gamma) \Longrightarrow j^n \in S_n.
\end{equation*}

Together with the monotonicity of the probability measure: $A \subseteq B \Rightarrow P(A) \leq P(B)$, 
we have for any $n \geq n_0$, 
\begin{equation}\label{eq_Thm_Char_1}
P_{\mathcal{J}^n}(S_n) \geq P_{\mathcal{J}^n}(\{j^n \mid \SPAN(\phi_n, j^n) > n(\BARE - \gamma)\}).
\end{equation}
On the other hand, the definition of $\BARE$ ensures that there is $\varepsilon_0 > 0$ such that
\begin{equation}\label{eq_Thm_Char_2}
\exists \{n_i\}_i \subseteq \mathbb{N} \text{~s.t.~} \forall i \in \mathbb{N}, 
\quad P_{\mathcal{J}^{n_i}}(\{j^{n_i} \mid \frac{1}{{n_i} \cdot v_\mathrm{sum}} T_{n_i}(j^{n_i}) > \BARE - \gamma\}) \geq \varepsilon_0.
\end{equation}
Combining Lemma~\ref{Lem_sche_lower-bound} and the two inequalities~(\ref{eq_Thm_Char_1}) and~(\ref{eq_Thm_Char_2}), we obtain
\begin{align}
P_{\mathcal{J}^{n_i}}(S_{n_i})&\geq P_{\mathcal{J}^{n_i}}(\{j^{n_i} \mid \SPAN(\phi_{n_i}, j^{n_i}) > {n_i}(\BARE - \gamma)\})\\
&\geq P_{\mathcal{J}^{n_i}}(\{j^{n_i} \mid \frac{1}{ v_\mathrm{sum} \cdot n_i} T_{n_i}(j^{n_i})> {n_i}(\BARE - \gamma)\})\\
&\geq \varepsilon_0.
\end{align}
This leads to a contradiction because as $i \to \infty$, $P_{\mathcal{J}^{n_i}}(S_{n_i}) \to 0$  holds but $\varepsilon_0$ is a positive constant.
This completes proof.
\end{proof}

\subsection{Applications of Theorem~\ref{Thm_non-iid}}\label{subsec_app_Char}%%%%%%%%%%%%%%%%%%%%%%%%%%%%%%%%%%%%%%%%%%
We now know that the optimal rate $R(\mathcal{S}_m)$ is completely characterized by the quantity $\BARE$.
In this section, we investigate how the quantity $\BARE$ changes depending on scheduling problems $\mathcal{S}_m$.
\paragraph{The i.i.d. scenario:}
Let us analyze the value of $\BARE$ when $\mathcal{P} = \{P_{\mathcal{J}^n}\}$ is a sequence of i.i.d. distributions of $P_\mathcal{J}$.
Recall that $\BARE$ is defined as
\begin{equation*}
\bar{E}(\mathcal{S}_m) := \inf\{\alpha \geq 0 \mid \lim_{n \to \infty} P_{\mathcal{J}^n}\{j^n \mid \frac{1}{n \cdot v_\mathrm{sum}}T_n(j^n) > \alpha\} = 0\}
\end{equation*}
where $T_n(j^n) = \sum_{i \leq n}T(j_i)$.
Then, by the law of large numbers on the random variable $\frac{1}{v_\mathrm{sum}}T(J)$, we have, for any $\varepsilon > 0$, 
\begin{equation}\label{eq_subsec_App_law-of}
\lim_{n \to \infty} \Pr\left\{\left|\frac{1}{m}\E_J[T(J)] - \frac{1}{n \cdot v_\mathrm{sum}}T_n(j^n)\right| > \varepsilon\right\}
= 0
\end{equation}
This shows that $\BARE = \frac{1}{v_\mathrm{sum}}\E_J[T(J)]$ holds, since Equation~(\ref{eq_subsec_App_law-of}) implies 
for any $\BARE + \varepsilon$, $\lim_{n \to \infty} P_{\mathcal{J}^n}\{j^n \mid \frac{1}{n \cdot v_\mathrm{sum}}T_n(j^n) > \BARE + \varepsilon\} = 0$ 
and 
for any $\BARE - \varepsilon$, $\lim_{n \to \infty} P_{\mathcal{J}^n}\{j^n \mid \frac{1}{n \cdot v_\mathrm{sum}}T_n(j^n) > \BARE - \varepsilon\} = 1$.
\par 
Therefore, we have
\begin{Cor}
When $\mathcal{P}$ is i.i.d., $\BARE = \frac{1}{v_\mathrm{sum}}\E_J[T(J)]$ holds.
\end{Cor}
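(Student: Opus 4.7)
The plan is to deduce the corollary from the weak law of large numbers applied to the bounded i.i.d.\ random variables $T(J_1), T(J_2), \ldots$ (bounded because $\mathcal{J}$ is finite, so $T$ has finite range). Writing $\mu := \frac{1}{v_\mathrm{sum}}\E_J[T(J)]$, the WLLN immediately yields, for every $\varepsilon > 0$,
\begin{equation*}
\lim_{n \to \infty} P_{\mathcal{J}^n}\Bigl\{j^n \;\Big|\; \Bigl|\tfrac{1}{n v_\mathrm{sum}} T_n(j^n) - \mu\Bigr| > \varepsilon \Bigr\} = 0,
\end{equation*}
which is exactly Equation~\eqref{eq_subsec_App_law-of}. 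The strategy is then to convert this two-sided convergence into the two one-sided statements that characterize $\BARE$ as an infimum, and to show $\BARE \leq \mu$ and $\BARE \geq \mu$ separately.

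For the upper bound $\BARE \leq \mu$, I would fix an arbitrary $\varepsilon > 0$ and observe that the one-sided tail $P_{\mathcal{J}^n}\{j^n \mid \frac{1}{n v_\mathrm{sum}}T_n(j^n) > \mu + \varepsilon\}$ is dominated by the two-sided tail above, hence tends to $0$. Thus $\mu + \varepsilon$ lies in the set over which the infimum defining $\BARE$ is taken, so $\BARE \leq \mu + \varepsilon$. Letting $\varepsilon \downarrow 0$ gives the desired inequality.

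For the lower bound $\BARE \geq \mu$, I would again fix $\varepsilon > 0$ and note that the complementary one-sided tail $P_{\mathcal{J}^n}\{j^n \mid \frac{1}{n v_\mathrm{sum}}T_n(j^n) > \mu - \varepsilon\}$ satisfies
\begin{equation*}
P_{\mathcal{J}^n}\Bigl\{j^n \;\Big|\; \tfrac{1}{n v_\mathrm{sum}}T_n(j^n) > \mu - \varepsilon \Bigr\} \geq 1 - P_{\mathcal{J}^n}\Bigl\{j^n \;\Big|\; \bigl|\tfrac{1}{n v_\mathrm{sum}}T_n(j^n) - \mu\bigr| > \varepsilon \Bigr\} \longrightarrow 1
\end{equation*}
as $n \to \infty$, so its limit is $1 \neq 0$. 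Hence $\mu - \varepsilon$ does \emph{not} belong to the infimum set; moreover, since $\alpha \mapsto P_{\mathcal{J}^n}\{j^n \mid \frac{1}{n v_\mathrm{sum}}T_n(j^n) > \alpha\}$ is non-increasing in $\alpha$, no $\alpha < \mu - \varepsilon$ lies in the set either. Therefore $\BARE \geq \mu - \varepsilon$, and sending $\varepsilon \downarrow 0$ finishes the argument.

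The proof contains no real obstacle; the only point requiring a moment of care is that the definition of $\BARE$ demands the \emph{limit} (not merely the $\limsup$) of the tail probability to vanish, and symmetrically that for the lower bound we need the tail probability to stay bounded away from $0$ along all (not merely some) large $n$. The WLLN provides genuine convergence, so both conditions are supplied cleanly, and the monotonicity of the tail in $\alpha$ closes the gap between ``$\mu - \varepsilon$ is not in the set'' and ``$\BARE \geq \mu - \varepsilon$''.
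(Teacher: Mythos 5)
Your proof is correct and follows essentially the same route as the paper: apply the weak law of large numbers to $T(J_1), T(J_2), \ldots$ to obtain the two-sided concentration statement (the paper's Equation~\eqref{eq_subsec_App_law-of}), then split it into the two one-sided tail conditions that pin down the infimum defining $\BARE$. Your extra remark about the monotonicity of the tail in $\alpha$, closing the gap between ``$\mu - \varepsilon$ is not in the set'' and ``$\BARE \geq \mu - \varepsilon$,'' is a small rigorization of a step the paper leaves implicit, but it does not change the argument.
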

\paragraph{Mixture distributions:}
Next, we analyze $\BARE$ when $P_{\mathcal{J}^n}$ is defined as a convex combination of i.i.d. distributions $\{P^i_{\mathcal{J}^n}\}$.
For simplicity, we assume the distribution $P_{\mathcal{J}^n}$ is expressed by two i.i.d. distributions $P^1_{\mathcal{J}^n}$ and $P^2_{\mathcal{J}^n}$:
\begin{equation*}
P_{\mathcal{J}^n}(j^n) = \alpha_1 P^1_{\mathcal{J}^n}(j^n) + \alpha_2 P^2_{\mathcal{J}^n}(j^n)
\end{equation*}
for $\alpha_1 + \alpha_2 = 1, \alpha_1 > 0, \alpha_2 > 0$.
In this case, we get
\begin{align}
\PRJ\left(\frac{1}{n \cdot v_\mathrm{sum}}T_n(j^n) > \alpha\right)&= \sum_{\substack{j^n \in \mathcal{J}^n\\ (1/n\cdot v_\mathrm{sum})T_n(j^n) > \alpha}} \PRJ(j^n)\\
&= \alpha_1P^1_{\mathcal{J}^n}\left(\frac{1}{n\cdot v_\mathrm{sum}}T_n(j^n) > \alpha\right) + \alpha_2P^2_{\mathcal{J}^n}\left(\frac{1}{n\cdot v_\mathrm{sum}}T_n(j^n) > \alpha\right).
\end{align}
Therefore, applying the law of large numbers to each of $P^1_{\mathcal{J}^n}$ and $P^2_{\mathcal{J}^n}$ yields the following corollary.
\begin{Cor}
For a scheduling problem $\BARE = (\mathcal{J}, T, \mathcal{V}, \mathcal{P})$ whose distribution is a mixture of two i.i.d. distributions $\PRJ^1$ and $\PRJ^2$,
\begin{equation*}
\BARE
= \max\left\{\frac{1}{v_\mathrm{sum}}\E^1_J[T(J)], \frac{1}{v_\mathrm{sum}}\E^2_J[T(J)]\right\}
\end{equation*}
holds where $\E^i_J[T(J)]~(i = 1, 2)$ is the expectation of $T(J)$ when the distribution is $P^i_\mathcal{J}$.
\end{Cor}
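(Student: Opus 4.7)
The plan is to leverage the decomposition
\[
\PRJ\!\left(\frac{1}{n\cdot v_\mathrm{sum}} T_n(j^n) > \alpha\right)
= \alpha_1\, P^1_{\mathcal{J}^n}\!\left(\frac{1}{n\cdot v_\mathrm{sum}} T_n(j^n) > \alpha\right) + \alpha_2\, P^2_{\mathcal{J}^n}\!\left(\frac{1}{n\cdot v_\mathrm{sum}} T_n(j^n) > \alpha\right)
\]
already derived in the paragraph preceding the corollary, and then apply the weak law of large numbers to each i.i.d.\ component separately. Write $\mu_i := \frac{1}{v_\mathrm{sum}}\E^i_J[T(J)]$ for $i=1,2$ and $M := \max\{\mu_1,\mu_2\}$; the goal is $\BARE = M$.

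For the upper bound $\BARE \leq M$, fix any $\alpha > M$. Then both $P^1_{\mathcal{J}^n}(\frac{1}{n\cdot v_\mathrm{sum}} T_n(j^n) > \alpha)$ and $P^2_{\mathcal{J}^n}(\frac{1}{n\cdot v_\mathrm{sum}} T_n(j^n) > \alpha)$ tend to $0$ by the weak law of large numbers, exactly as in the proof of the preceding i.i.d.\ corollary. The convex combination therefore also tends to $0$, so $\alpha$ lies in the set defining $\BARE$. Letting $\alpha \downarrow M$ gives $\BARE \leq M$.

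For the matching lower bound $\BARE \geq M$, take any $\alpha < M$ and, without loss of generality, assume $\alpha < \mu_1$. Then the weak law applied to $P^1_{\mathcal{J}^n}$ yields $P^1_{\mathcal{J}^n}(\frac{1}{n\cdot v_\mathrm{sum}} T_n(j^n) > \alpha) \to 1$, so the convex combination is at least $\alpha_1$ in the limit. Since $\alpha_1 > 0$ by assumption, this limit is strictly positive, so $\alpha$ violates the defining condition of $\BARE$. Hence $\BARE \geq \alpha$ for every $\alpha < M$, whence $\BARE \geq M$.

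I do not expect any serious obstacle: the argument is essentially a one-line application of the law of large numbers to each mixture component, and the only hypothesis used non-trivially is $\alpha_1,\alpha_2 > 0$, which prevents either component from being washed out in the limit. The same reasoning extends verbatim to arbitrary finite mixtures $P_{\mathcal{J}^n} = \sum_{i=1}^k \alpha_i P^i_{\mathcal{J}^n}$ of i.i.d.\ distributions with strictly positive weights, yielding $\BARE = \max_{i \leq k} \frac{1}{v_\mathrm{sum}}\E^i_J[T(J)]$.
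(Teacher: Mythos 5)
Your proof is correct and matches the paper's approach: the paper derives the same convex-combination decomposition of $\PRJ\bigl(\frac{1}{n v_\mathrm{sum}} T_n(j^n) > \alpha\bigr)$ and then simply states "applying the law of large numbers to each of $P^1_{\mathcal{J}^n}$ and $P^2_{\mathcal{J}^n}$ yields the corollary," which your argument unpacks by splitting into the cases $\alpha > M$ (both components vanish) and $\alpha < M$ (the dominant component's probability tends to $1$, keeping the mixture bounded away from $0$ by $\alpha_1 > 0$). You are filling in exactly the sketch the paper intends, including the crucial observation that the strictly positive weights prevent the dominant component from being washed out.
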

In fact, this corollary can be generalized for a mixture of \emph{countably infinitely many,}  \emph{general} (i.e., non-i.i.d.) distributions.
Suppose $\mathcal{P} = \{\PRJ\}$ is defined as
\begin{equation*}
\PRJ(j^n)
= \sum_i \alpha_i \PRJ^i(j^n)
\end{equation*}
where $\{\alpha_i\}_i \subseteq [0, 1]$ satisfies $\sum_i \alpha_i = 1$ and $\PRJ^i$ is a general distribution on $\mathcal{J}^n$.
In this case, we can show the following in a similar manner as~\cite[Theorem~1.4.2]{Han03}.
\begin{Cor}
For a scheduling problem $\BARE = (\mathcal{J}, T, \mathcal{V}, \mathcal{P})$ whose distribution is a mixture of distributions $\{\PRJ^i\}$,
\begin{equation*}
\BARE
= \sup_{i: \alpha_i > 0}\bar{E}(\mathcal{S}^i_m)
\end{equation*}
holds where $\mathcal{S}^i_m := (\mathcal{J}, T, \mathcal{V}, \{\PRJ^i\})$.
\end{Cor}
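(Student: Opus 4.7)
The plan is to prove both inequalities $\bar{E}(\mathcal{S}_m) \geq \sup_{i: \alpha_i > 0}\bar{E}(\mathcal{S}^i_m)$ and $\bar{E}(\mathcal{S}_m) \leq \sup_{i: \alpha_i > 0}\bar{E}(\mathcal{S}^i_m)$ using the identity
\begin{equation*}
P_{\mathcal{J}^n}\!\left\{j^n \mid \tfrac{1}{n\,v_{\mathrm{sum}}}T_n(j^n) > \alpha\right\}
= \sum_i \alpha_i\, P^i_{\mathcal{J}^n}\!\left\{j^n \mid \tfrac{1}{n\,v_{\mathrm{sum}}}T_n(j^n) > \alpha\right\},
\end{equation*}
which is immediate from the mixture definition of $P_{\mathcal{J}^n}$, together with the definition of $\bar{E}$ as an infimum.

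For the ``$\geq$'' direction, I would fix any index $i_0$ with $\alpha_{i_0} > 0$ and any $\alpha < \bar{E}(\mathcal{S}^{i_0}_m)$. By the defining infimum, $P^{i_0}_{\mathcal{J}^n}\{\tfrac{1}{n\,v_{\mathrm{sum}}}T_n(j^n) > \alpha\}$ does not tend to $0$, so one can extract a subsequence $\{n_k\}$ on which this probability is bounded below by some $\varepsilon_0 > 0$. The identity above then gives $P_{\mathcal{J}^{n_k}}\{\cdots > \alpha\} \geq \alpha_{i_0}\varepsilon_0 > 0$ on the same subsequence, so $\alpha$ is not an admissible value in the infimum defining $\bar{E}(\mathcal{S}_m)$, i.e.\ $\alpha \leq \bar{E}(\mathcal{S}_m)$. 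Letting $\alpha \uparrow \bar{E}(\mathcal{S}^{i_0}_m)$ and then taking the supremum over such $i_0$ yields the desired inequality.

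For the ``$\leq$'' direction, I would take any $\alpha > \sup_{i: \alpha_i > 0} \bar{E}(\mathcal{S}^i_m)$. Then for every $i$ with $\alpha_i > 0$ one has $\alpha > \bar{E}(\mathcal{S}^i_m)$, and the definition of $\bar{E}(\mathcal{S}^i_m)$ (as the infimum of admissible thresholds, using that the set of admissible values is upward-closed) yields $P^i_{\mathcal{J}^n}\{\cdots > \alpha\} \to 0$ as $n \to \infty$. Plugging these termwise limits into the identity, I would apply the dominated convergence theorem with respect to the counting measure $i \mapsto \alpha_i$ to pass to the limit inside the countable sum, obtaining $P_{\mathcal{J}^n}\{\cdots > \alpha\} \to 0$. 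Hence $\alpha$ is admissible for $\bar{E}(\mathcal{S}_m)$, so $\bar{E}(\mathcal{S}_m) \leq \alpha$; letting $\alpha$ decrease to $\sup_{i: \alpha_i > 0} \bar{E}(\mathcal{S}^i_m)$ finishes the argument.

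The main obstacle is the ``$\leq$'' direction in the \emph{countably infinite} case: for a finite mixture the limit of the sum is trivially the sum of the limits, but for infinitely many components one needs a summable dominator uniform in $n$. The weights $\alpha_i$ themselves serve this role, since $\alpha_i P^i_{\mathcal{J}^n}\{\cdots\} \leq \alpha_i$ pointwise and $\sum_i \alpha_i = 1 < \infty$, which is exactly the device used in~\cite[Theorem~1.4.2]{Han03} and which transfers verbatim to our setting.
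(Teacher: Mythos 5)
Your proof is correct: the lower bound isolates a single positive-weight component to show that any $\alpha$ strictly below some $\bar{E}(\mathcal{S}^{i_0}_m)$ fails to be admissible for $\bar{E}(\mathcal{S}_m)$, and the upper bound passes the termwise limits through the countable mixture via dominated convergence with the summable weights $\{\alpha_i\}$ as dominator, correctly noting that terms with $\alpha_i = 0$ vanish identically. The paper itself states no proof but defers to \cite[Theorem~1.4.2]{Han03}, and your argument is exactly that one transported to the present setting, so the two approaches coincide.
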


\paragraph{Markov distributions:}
As mentioned in Proposition~\ref{Prop_expectation_Markov}, $\frac{1}{nv_\mathrm{sum}}T_n(j^n) \overset{a.s.}{\to} \frac{1}{v_\mathrm{sum}}\E_J[T(J)]$ holds by the ergodic theorem.
Considering the fact that almost sure convergence implies convergence in probability, we then obtain the following corollary:
\begin{Cor}
For a scheduling problem $\mathcal{S}_m$ with a Markovian distribution, 
$\BARE =\frac{1}{v_\mathrm{sum}} \E_J[T(J)]$ holds.
\end{Cor}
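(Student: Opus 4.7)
The plan is to mirror the i.i.d.\ argument given earlier in Section~\ref{subsec_app_Char} but with the law of large numbers replaced by the ergodic theorem, exactly as foreshadowed in the statement. Concretely, I would first invoke Birkhoff's ergodic theorem (equivalently, the strong law of large numbers for irreducible, positive-recurrent Markov chains on the finite state space $\mathcal{J}$) applied to the real-valued function $T: \mathcal{J} \to \mathbb{N}$. Because $T$ is bounded, this gives the almost-sure convergence
\begin{equation*}
\frac{1}{n \cdot v_\mathrm{sum}} T_n(j^n) \;=\; \frac{1}{n \cdot v_\mathrm{sum}} \sum_{i \leq n} T(j_i) \;\overset{a.s.}{\longrightarrow}\; \frac{1}{v_\mathrm{sum}}\E_J[T(J)],
\end{equation*}
where $\E_J[T(J)]$ is taken with respect to the stationary distribution of the chain. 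Since almost-sure convergence implies convergence in probability, for every $\varepsilon > 0$ one has
\begin{equation*}
\lim_{n \to \infty} \PRJ\left\{j^n \;\Big|\; \left|\frac{1}{n \cdot v_\mathrm{sum}} T_n(j^n) - \frac{1}{v_\mathrm{sum}}\E_J[T(J)]\right| > \varepsilon\right\} = 0.
\end{equation*}

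With this convergence in hand, the second step is to match the two bounds of $\BARE$ against $\frac{1}{v_\mathrm{sum}}\E_J[T(J)]$. For the upper bound, pick any $\varepsilon > 0$ and set $\alpha = \frac{1}{v_\mathrm{sum}}\E_J[T(J)] + \varepsilon$; the convergence in probability gives $\lim_n \PRJ\{j^n \mid \frac{1}{n v_\mathrm{sum}} T_n(j^n) > \alpha\} = 0$, so $\alpha$ lies in the set that defines $\BARE$, and consequently $\BARE \leq \frac{1}{v_\mathrm{sum}}\E_J[T(J)] + \varepsilon$. For the lower bound, pick $\alpha = \frac{1}{v_\mathrm{sum}}\E_J[T(J)] - \varepsilon$; then the same convergence (applied to the complementary tail) forces $\PRJ\{j^n \mid \frac{1}{n v_\mathrm{sum}} T_n(j^n) > \alpha\} \to 1$, so $\alpha$ is excluded from the defining set and hence $\BARE \geq \frac{1}{v_\mathrm{sum}}\E_J[T(J)] - \varepsilon$. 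Letting $\varepsilon \to 0$ in both directions yields the claimed equality.

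There is essentially no hard step: the argument is a verbatim transcript of the i.i.d.\ case with the law of large numbers swapped for the ergodic theorem. The only subtlety worth flagging is the implicit assumption required to even speak of ``the'' stationary distribution in $\E_J[T(J)]$, namely that the Markov chain defining $\mathcal{P}$ is irreducible and positive recurrent (automatic here because $\mathcal{J}$ is finite, provided irreducibility is assumed as is standard). Once that hypothesis is in place, the rest of the proof is a one-line application of the definition of $\BARE$, just as in the i.i.d.\ corollary.
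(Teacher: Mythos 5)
Your argument is correct and is essentially the same as the paper's: the paper likewise cites the ergodic theorem to get $\frac{1}{n v_\mathrm{sum}} T_n(j^n) \overset{a.s.}{\to} \frac{1}{v_\mathrm{sum}}\E_J[T(J)]$, passes to convergence in probability, and then reuses the two-sided sandwich argument from the i.i.d.\ case to pin down $\BARE$. Your explicit spelling-out of the upper and lower bound directions and the flag about implicitly needing irreducibility of the chain are minor additions, not a different route.
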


\section{Second order asymptotics}\label{Sec_Sec-order_asymptotics}
In Section~\ref{Sec_Char-of-opt}, we showed Theorem~\ref{Thm_non-iid} which tells us 
that the optimal scheduling $\{(\phi^\textrm{opt}_n, S^\textrm{opt}_n)\}$ satisfies
\begin{equation}\label{eq_sec-order_1}
\COST(\phi^\textrm{opt}_n, S^\textrm{opt}_n) = \BARE \cdot n + o(n)
\end{equation}
when $\Pr(S_n^\textrm{opt})$ satisfy the condition $\Pr(S_n^\textrm{opt}) \to 0$.
This is further extended to Theorem~\ref{Thm_strong-converse} which shows that for some scheduling problems such as the i.i.d. scenario, 
the optimal rate does not change even if the condition is relaxed to $\Pr(S_n^\textrm{opt}) \leq \varepsilon$ for some $\varepsilon < 1$.
In Section~\ref{Sec_Sec-order_asymptotics}, we refine the equation~\ref{eq_sec-order_1} and show the explicit factors hidden 
in $o(n)$ in the case of the most fundamental scenario, the i.i.d. scheduling problem.

\begin{Thm_Sec}
Let $\mathcal{S}_m$ be an i.i.d. scheduling problem.
Suppose $(\phi_n^\textrm{opt}, S_n^\textrm{opt})$ is an optimal scheduling among all schedulings $(\phi_n, S_n)$ for $\mathcal{S}_m$ satisfying $\Pr(S_n) \leq \varepsilon$. Then, 
\begin{equation*}
\COST\SCHEP
= \frac{n}{v_\mathrm{sum}}\E[T(J)] - \frac{\sqrt{V(\mathcal{S}_m)n}}{v_\mathrm{sum}}\Phi^{-1}(\varepsilon) + \Theta(1).
\end{equation*}
holds where $\Phi(x)$ is the standard Gaussian distribution and $V(\mathcal{S}_m) := \E_J[(T(J) - \E[T(J)])^2]$.
\end{Thm_Sec}

\begin{proof}
First, we observe that the optimal set is expressed as $S^\OPT_n = \{ j^n \mid \frac{1}{n \cdot v_\mathrm{sum}} T_n(j^n) > R_n^+(\varepsilon)\}$
where $R_n^+(\varepsilon) := \inf\{\alpha \geq 0 \mid \Pr(\frac{1}{n \cdot v_\mathrm{sum}} T_n(J^n) > \alpha) \leq \varepsilon\}$.
(We can check that for any scheduling $(\phi_n, S_n)$ with the condition $\Pr(S_n) \leq \varepsilon$, $\COST(\phi_n^\OPT, S^\OPT_n) \leq \COST\SCHE$ holds.)
Note that $\phi_n^\OPT$ is defined in Lemma~\ref{Lem_sche_upper-bound}.
Therefore, the optimal scheduling is a pair $(\phi^\OPT_n, S^\OPT_n)$.
\par
We now evaluate $\frac{1}{n}\COSTOPT$. By Lemma~\ref{Lem_sche_upper-bound} and Lemma~\ref{Lem_sche_lower-bound}, 
\begin{equation}\label{eq_Thm_ineq_1}
\frac{1}{n\cdot v_\mathrm{sum}}\max_{j^n \in S_n^c}T_n(j^n) \leq \frac{1}{n} \max_{j^n \in S_n^c} \SPAN\SCHEP = \frac{1}{n}\COSTOPT 
\leq \frac{1}{n\cdot v_\mathrm{sum}}\max_{j^n \in S_n^c}T_n(j^n) + \frac{T_\mathrm{max}}{n\cdot v_\mathrm{min}}
\end{equation}
holds. We also obtain $\frac{1}{n\cdot v_\mathrm{sum}}\max_{j^n \in S_n^c}T_n(j^n) = R_n^+(\varepsilon)$ by the definition of $R_n^+(\varepsilon)$. Therefore, the equation~(\ref{eq_Thm_ineq_1}) is simplified to
\begin{equation}\label{eq_Thm_ineq_2}
 R_n^+(\varepsilon)\leq  \frac{1}{n}\COSTOPT  \leq   R_n^+(\varepsilon)+ \frac{T_\mathrm{max}}{n\cdot v_\mathrm{min}}.
\end{equation}
This means that it is sufficient to analyze the term $R^+_n(\varepsilon)$.
\par
We now apply the Berry-Esseen theorem:
\begin{Thm*}[Berry-Esseen]
Let $X_i$ be an i.i.d. random variable with zero mean and $T := \E[|X_i|^3] < \infty$. Then
\begin{equation}\label{eq_Berry-Esseen}
\sup_{a \in \mathbb{R}} \left| \Pr\left(\frac{1}{\sigma \sqrt{n}}\sum_{i\leq n}X_i < a\right) - \Phi(a)\right| \leq \frac{T}{\sigma^3\sqrt{n}}
\end{equation}
where $\sigma^2 := \E[X_i^2]$.
\end{Thm*}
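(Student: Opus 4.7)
The plan is to extend the reduction already begun in the excerpt. By Lemmas~\ref{Lem_sche_upper-bound}--\ref{Lem_sche_lower-bound}, for the scheduler $\phi_n^{\OPT}$ of Lemma~\ref{Lem_sche_upper-bound} one has $\SPAN(\phi_n^{\OPT}, j^n) = T_n(j^n)/v_{\mathrm{sum}} + O(1)$, so that the optimal discard of probability mass $\leq \varepsilon$ is to greedily remove the job lists with the largest values of $T_n(j^n)$. This yields the characterization $S_n^{\OPT} = \{j^n \mid T_n(j^n)/(n v_{\mathrm{sum}}) > R_n^+(\varepsilon)\}$ and the sandwich
\begin{equation*}
R_n^+(\varepsilon) \leq \frac{1}{n}\COSTOPT \leq R_n^+(\varepsilon) + \frac{T_{\mathrm{max}}}{n v_{\mathrm{min}}},
\end{equation*}
reducing the theorem to proving $n R_n^+(\varepsilon) = n \E[T(J)]/v_{\mathrm{sum}} - \sqrt{V(\mathcal{S}_m) n}\,\Phi^{-1}(\varepsilon)/v_{\mathrm{sum}} + \Theta(1)$, since the $T_{\mathrm{max}}/v_{\mathrm{min}}$ slack collapses into the $\Theta(1)$ error.

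Next, I would standardize the tail event using the centered i.i.d. random variables $X_i := T(J_i) - \E[T(J)]$, which have mean zero, variance $\sigma^2 = V(\mathcal{S}_m)$, and finite third moment because $\mathcal{J}$ is a finite alphabet. Writing $z_n(\alpha) := \sqrt{n}(v_{\mathrm{sum}} \alpha - \E[T(J)])/\sigma$, one has
\begin{equation*}
\Pr\!\left(\frac{T_n(J^n)}{n v_{\mathrm{sum}}} > \alpha\right) = \Pr\!\left(\frac{1}{\sigma \sqrt{n}}\sum_{i \leq n} X_i > z_n(\alpha)\right) = 1 - \Phi(z_n(\alpha)) + O(1/\sqrt{n})
\end{equation*}
uniformly in $\alpha$, by the Berry--Esseen estimate~\eqref{eq_Berry-Esseen}. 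Setting this probability equal to $\varepsilon$ and inverting via $\Phi^{-1}$, which is locally Lipschitz at $1-\varepsilon$ since $\phi(\Phi^{-1}(1-\varepsilon)) > 0$ for $\varepsilon \in (0,1)$, gives $z_n(R_n^+(\varepsilon)) = \Phi^{-1}(1-\varepsilon) + O(1/\sqrt{n})$. Solving this for $\alpha$ yields
\begin{equation*}
R_n^+(\varepsilon) = \frac{\E[T(J)]}{v_{\mathrm{sum}}} + \frac{\sigma\,\Phi^{-1}(1-\varepsilon)}{v_{\mathrm{sum}} \sqrt{n}} + O(1/n),
\end{equation*}
and multiplying by $n$ together with the identity $\Phi^{-1}(1-\varepsilon) = -\Phi^{-1}(\varepsilon)$ delivers the claimed expansion.

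The main obstacle will be tracking the error propagation carefully enough to land on $\Theta(1)$ rather than $o(\sqrt{n})$ precision. The $O(1/\sqrt{n})$ CDF gap from Berry--Esseen, after division by the local Lipschitz constant $\phi(\Phi^{-1}(1-\varepsilon))$ of $\Phi^{-1}$, produces an $O(1/\sqrt{n})$ perturbation of $z_n$; dividing again by the $\sqrt{n}$ hidden inside $z_n$ shrinks this to $O(1/n)$ on $\alpha = R_n^+(\varepsilon)$, which scales back to $\Theta(1)$ after multiplication by $n$. A secondary technical nuisance is the discreteness of $T_n(J^n) \in \mathbb{N}$, which forces $R_n^+(\varepsilon)$ to lie on a lattice of spacing $O(1/n)$: the infimum defining it is attained up to this rounding, again contributing only $\Theta(1)$ to $n R_n^+(\varepsilon)$. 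Finally, the greedy-discard step of the reduction must be verified carefully because pairing the discard rule $\{T_n > n v_{\mathrm{sum}} R_n^+(\varepsilon)\}$ with a suboptimal $\phi_n$ need not be strictly optimal, but any such mismatch is exactly the $T_{\mathrm{max}}/v_{\mathrm{min}}$ gap in Lemma~\ref{Lem_sche_upper-bound} and is safely absorbed into the $\Theta(1)$ term.
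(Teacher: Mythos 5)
Your proposal does not prove the statement in question. The statement is the Berry--Esseen theorem itself, i.e., the bound~\eqref{eq_Berry-Esseen}; yet in the middle of your argument you invoke ``the Berry--Esseen estimate~\eqref{eq_Berry-Esseen}'' as a known fact, so as a proof of that statement the attempt is circular --- the target inequality is assumed, not established. What you have actually written is, nearly step for step, the paper's proof of Theorem~\ref{Thm_Sec-order}: the identification $S_n^{\OPT} = \{j^n \mid T_n(j^n)/(n v_\mathrm{sum}) > R_n^+(\varepsilon)\}$, the sandwich $R_n^+(\varepsilon) \leq \frac{1}{n}\COSTOPT \leq R_n^+(\varepsilon) + T_\mathrm{max}/(n v_\mathrm{min})$ from Lemmas~\ref{Lem_sche_upper-bound} and~\ref{Lem_sche_lower-bound}, the standardization $X_i := T(J_i) - \E[T(J)]$, the application of~\eqref{eq_Berry-Esseen}, and the inversion of $\Phi$ via a local Lipschitz/Taylor argument (your bookkeeping of the error propagation and of the lattice spacing of $T_n$ is, if anything, more explicit than the paper's). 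That is a fine sketch of Theorem~\ref{Thm_Sec-order}, but it is not a proof of the Berry--Esseen inequality.

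The paper itself offers no proof of~\eqref{eq_Berry-Esseen}; it quotes it as a classical theorem. A genuine proof requires entirely different machinery from anything in your proposal: typically Esseen's smoothing inequality, which bounds $\sup_{a}\left|\Pr\left(\frac{1}{\sigma\sqrt{n}}\sum_{i \leq n}X_i < a\right) - \Phi(a)\right|$ by an integral of $\left|\varphi_n(t) - e^{-t^2/2}\right|/|t|$ over a window $|t| \leq T_0$ plus a boundary term, where $\varphi_n$ is the characteristic function of the normalized sum, followed by a third-order Taylor expansion of $\E[e^{itX_1}]$ that uses precisely the finiteness of $\E|X_1|^3$. One should also be aware that this route yields a bound of the form $C \cdot T/(\sigma^3\sqrt{n})$ with a universal constant $C$; the constant-$1$ form quoted in the paper is valid only because the best known constants are below $1$, and is not what an elementary smoothing argument produces directly. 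In short: if your intended target was Theorem~\ref{Thm_Sec-order}, your route coincides with the paper's; for the stated Berry--Esseen theorem, the proof is missing entirely.
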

For any $R \in \mathbb{R}$, applying the Berry-Esseen theorem with 
\[
X_i := \frac{1}{v_\mathrm{sum}}(T[J_i] - \E[J]) \text{~and~} a = n\cdot\frac{v_\mathrm{sum}\cdot R - \E[T(J)]}{\sqrt{V(\mathcal{S}_m)n}}
\]
yields
\begin{equation*}
 \left| \Pr\left(\frac{1}{n\cdot v_\mathrm{sum}} T_n(J^n)< R\right) - \Phi\left(n\cdot\frac{v_\mathrm{sum}\cdot R - \E[T(J)]}{\sqrt{V(\mathcal{S}_m)n}}\right)\right| 
 \leq \frac{T}{\sqrt{nV(\mathcal{S}_m)^3}}.
\end{equation*}
This implies
\begin{align*}
\Pr\left(\frac{1}{n\cdot v_\mathrm{sum}} T_n(J^n) \geq R\right) - \frac{T}{\sqrt{nV(\mathcal{S}_m)^3}}
&\leq 1 - \Phi\left(n\cdot\frac{v_\mathrm{sum}\cdot R - \E[T(J)]}{\sqrt{V(\mathcal{S}_m)n}}\right)\\
&= \Phi\left(-n\cdot\frac{v_\mathrm{sum}\cdot R - \E[T(J)]}{\sqrt{V(\mathcal{S}_m)n}}\right)\\
& \leq \Pr\left(\frac{1}{n\cdot v_\mathrm{sum}} T_n(J^n) \geq R\right) + \frac{T}{\sqrt{nV(\mathcal{S}_m)^3}}
\end{align*}
where $\Phi(x) = 1 - \Phi(-x)$ is used.
We therefore obtain, by the simple relation:
\[
\Pr\left(\frac{1}{n\cdot v_\mathrm{sum}} T_n(J^n) > R\right)  \leq \Pr\left(\frac{1}{n\cdot v_\mathrm{sum}} T_n(J^n) \geq R\right),
\]
\begin{equation*}
\Pr\left(\frac{1}{n\cdot v_\mathrm{sum}} T_n(J^n) > R\right) - \frac{T}{\sqrt{nV(\mathcal{S}_m)^3}}
\leq \Phi\left(-n\cdot\frac{v_\mathrm{sum}\cdot R - \E[T(J)]}{\sqrt{V(\mathcal{S}_m)n}}\right)
\end{equation*}
and, by taking the sequence $R_n \searrow R$ and the continuity of $\Phi(\cdot)$ and the probability measure,
\begin{equation*}
\Phi\left(-n\cdot\frac{v_\mathrm{sum}\cdot R - \E[T(J)]}{\sqrt{V(\mathcal{S}_m)n}}\right)
 \leq \Pr\left(\frac{1}{n\cdot v_\mathrm{sum}} T_n(J^n) > R\right) + \frac{T}{\sqrt{nV(\mathcal{S}_m)^3}}
\end{equation*}
for any $R \in \mathbb{R}$.
By substituting $R = R_n^+(\varepsilon)$ and taking the inverse $\Phi^{-1}$,  these inequalities imply
\begin{equation}\label{eq_Thm_Sec_order_1}
\Phi^{-1}\left(\varepsilon - \frac{T}{\sqrt{nV(\mathcal{S}_m)^3}}\right)
\leq -n\cdot\frac{v_\mathrm{sum}\cdot R - \E[T(J)]}{\sqrt{V(\mathcal{S}_m)n}}
\leq \Phi^{-1}\left(\varepsilon + \frac{T}{\sqrt{nV(\mathcal{S}_m)^3}}\right).
\end{equation}
We now expand  the term
$\Phi^{-1}\left(\varepsilon + \frac{T}{\sqrt{nV(\mathcal{S}_m)^3}}\right)$
by the Taylor approximation as follows
\begin{equation}\label{eq_Thm_Sec_order_2}
\Phi^{-1}\left(\varepsilon + \frac{T}{\sqrt{nV(\mathcal{S}_m)^3}}\right)
= \Phi^{-1}(\varepsilon) + \Theta \left(\frac{1}{\sqrt{n}}\right).
\end{equation}
Combining the inequality~(\ref{eq_Thm_Sec_order_1}) and the equation~(\ref{eq_Thm_Sec_order_2}), we get
\begin{equation*}
n v_\mathrm{sum} \cdot R_n^+(\varepsilon)
= n\E[T(J)] - \sqrt{V(\mathcal{S}_m)n}\Phi^{-1}(\varepsilon) + \Theta(1).
\end{equation*}
Together with the inequality~(\ref{eq_Thm_ineq_2}), we finally obtain
\begin{equation*}
\COST\SCHEP
= \frac{n}{v_\mathrm{sum}}\E[T(J)] - \frac{\sqrt{V(\mathcal{S}_m)n}}{v_\mathrm{sum}}\Phi^{-1}(\varepsilon) + \Theta(1).
\end{equation*}
This completes proof.
\end{proof}

\section*{Acknowledgement}
DS would like to take this opportunity to thank the "Nagoya University Interdisciplinary Frontier Fellowship" supported by JST and Nagoya University.
This work was supported by the MEXT Quantum Leap Flagship Program (MEXT Q-LEAP) grant No. JPMXS0120319794.

\bibliography{Citations/Scheduling, Citations/quant_info_D, Citations/comm_comp_D, Citations/books_D}

\bibliographystyle{unsrt}
\appendix
\section{Strong converse theorem}\label{Sec_converse}
Theorem~\ref{Thm_non-iid} tells us that the optimal rate of scheduling problem $\mathcal{S}_m$ is characterized by the quantity $\BARE$
when schedulings need to satisfy $P(S_n) \to 0$. 
Interestingly, in some scheduling problems, the probability $P(S_n)$ must approach to $1$ 
when a scheduling $\{(\phi_n, S_n)\}$ is designed to satisfy $\lim \frac{1}{n}\COST(\phi_n, S_n) \leq R(S_m) - \varepsilon$ for any small $\varepsilon$.
This is called the strong converse property in the information theory literature.
The definition of the strong converse property is as follows:
\begin{Def}
A scheduling problem $\mathcal{S}_m$ satisfies the strong converse property if and only if 
for any $\varepsilon >0$ and any scheduling $\{\phi_n, S_n\}$ for $\mathcal{S}_m$, 
$\limsup_{n \to \infty} \frac{1}{n}\COST(\phi_n, S_n) \leq R(S_m) - \varepsilon$ implies $\lim P(S_n) = 1$.
\end{Def}

In Theorem~\ref{Thm_strong-converse} below, we give a necessary and sufficient condition for the strong converse property.

\begin{Thm}\label{Thm_strong-converse}
A scheduling problem $\mathcal{S}_m$ satisfies the strong converse property if and only if $\BARE = \UBARE$ holds
where
\begin{equation*}
\UBARE := \sup\left\{\beta \geq 0 \mid \lim_{n \to \infty} P_{\mathcal{J}^n}\{j^n \mid \frac{1}{n \cdot v_\mathrm{sum}}T_n(j^n) < \beta\} = 0\right\}.
\end{equation*}
\end{Thm}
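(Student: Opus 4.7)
Both directions revolve around the random variable $X_n(j^n) := T_n(j^n)/(n\, v_\mathrm{sum})$, whose upper tail is controlled by $\BARE$ and whose lower tail is controlled by $\UBARE$; the role of Lemmas~\ref{Lem_sche_upper-bound} and~\ref{Lem_sche_lower-bound} is merely to translate such tail statements into bounds on $\COST(\phi_n, S_n)/n$. The single preliminary observation I would record first is that the set $A := \{\beta \geq 0 \mid \lim_n \PRJ(X_n < \beta) = 0\}$ is downward closed (immediate from monotonicity of probability), so $A$ is an interval of the form $[0, \UBARE)$ or $[0, \UBARE]$; in particular every $\gamma > \UBARE$ satisfies $\limsup_n \PRJ(X_n < \gamma) > 0$.

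\textbf{Sufficiency ($\BARE = \UBARE$ implies the strong converse).} Let $(\phi_n, S_n)$ satisfy $\limsup_n \frac{1}{n}\COST(\phi_n, S_n) \leq \BARE - \varepsilon$. By Lemma~\ref{Lem_sche_lower-bound}, $X_n(j^n) \leq \COST(\phi_n, S_n)/n$ for every $j^n \in S_n^c$, hence for all sufficiently large $n$,
\[
\{j^n \mid X_n(j^n) > \BARE - \varepsilon/2\} \subseteq S_n.
\]
Because $\BARE - \varepsilon/4 < \UBARE$, the preliminary observation yields $\PRJ(X_n < \BARE - \varepsilon/4) \to 0$, so $\PRJ(X_n > \BARE - \varepsilon/2) \geq \PRJ(X_n \geq \BARE - \varepsilon/4) \to 1$ and therefore $\PRJ(S_n) \to 1$, as required.

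\textbf{Necessity, via the contrapositive.} Assume $\UBARE < \BARE$ and fix any $\gamma \in (\UBARE, \BARE)$. I would build an explicit counterexample by setting $S_n := \{j^n \mid X_n(j^n) > \gamma\}$ and taking $\phi_n$ to be the scheduler of Lemma~\ref{Lem_sche_upper-bound}. That lemma gives, for every $j^n \in S_n^c$, $\SPAN(\phi_n, j^n) \leq T_n(j^n)/v_\mathrm{sum} + T_\mathrm{max}/v_\mathrm{min} \leq n\gamma + T_\mathrm{max}/v_\mathrm{min}$, so the rate achieved by the scheduling is at most $\gamma = R(\mathcal{S}_m) - (\BARE - \gamma)$, strictly below $R(\mathcal{S}_m)$. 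Meanwhile, since $\gamma > \UBARE$, the preliminary observation gives $\limsup_n \PRJ(X_n < \gamma) > 0$, whence $\liminf_n \PRJ(S_n) \leq 1 - \limsup_n \PRJ(X_n < \gamma) < 1$; thus $\lim_n \PRJ(S_n) = 1$ fails and the strong converse property is violated by this scheduling.

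\textbf{Main obstacle.} Once the two sandwich lemmas are in hand the proof is mostly bookkeeping, so the genuinely delicate point is the necessity direction: I must pick $\gamma$ strictly inside $(\UBARE, \BARE)$ and thread the various strict-versus-nonstrict tail inequalities so that the constructed counterexample simultaneously beats the rate bound $R(\mathcal{S}_m) - \varepsilon$ for a concrete $\varepsilon > 0$ and has discard probability bounded away from $1$ along a subsequence. The downward-closedness of $A$ is precisely the tool that makes this threading clean; without it one would separately have to worry about whether the supremum defining $\UBARE$ is attained.
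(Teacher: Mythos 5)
Your proof is correct and follows essentially the same route as the paper: sufficiency hinges on Lemma~\ref{Lem_sche_lower-bound} pushing $S_n^c$ into the lower tail of $X_n$, and necessity hinges on constructing a scheduling from Lemma~\ref{Lem_sche_upper-bound} with a threshold set on $X_n$, the only cosmetic difference being that you phrase necessity as a contrapositive (picking $\gamma\in(\UBARE,\BARE)$ and exhibiting a counterexample to the strong converse) whereas the paper argues directly (assuming the property, constructing $S_n=\{X_n\geq\BARE-\varepsilon\}$, and letting $\varepsilon\to 0$ to get $\BARE\leq\UBARE$). Your explicit ``preliminary observation'' about the downward-closedness of $\{\beta\mid\lim_n\PRJ(X_n<\beta)=0\}$ is a small but welcome rigorization of a step the paper leaves implicit when it asserts $\PRJ(X_n<\UBARE-\varepsilon)\to 0$.
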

\begin{Rem}
By definition, $\UBARE \leq \BARE$ holds. Using the law of large numbers and the ergodic theorem respectively, 
we see that the i.i.d. scenario and the markov distribution in Section~\ref{subsec_app_Char} satisfy the condition $\BARE = \UBARE$.
On the other hand, in the case of mixture distributions, the condition does not hold in general.
\end{Rem}

\subsection{Proof of Theorem~\ref{Thm_strong-converse}}%%%%%%%%%%%%%%%%%%%%%%%%%%%%%%%%%%%%%%%%%%
The proof of Theorem~\ref{Thm_strong-converse} is as follows.
\begin{proof}[proof of Theorem~\ref{Thm_strong-converse}]
The proof consists of the sufficiency part and the necessity part.
\par
\paragraph{[Sufficiency]:} Assume $\BARE = \UBARE$.
Suppose a scheduling $\{(\phi_n, S_n)\}$ satisfies $\lim \frac{1}{n}\COST(\phi_n, S_n) \leq R(S_m) - 2\varepsilon = E(S_m) - 2\varepsilon$.
This implies that there is $n_0$ such that for any $n > n_0$, $\frac{1}{n}\COST(\phi_n, S_n) < \BARE - \varepsilon$ holds.
By the definition of $\COST(\phi_n, S_n)$, this further implies that for any $j^n \in \mathcal{J}^n\setminus S_n$, 
\begin{equation*}
\frac{1}{n\cdot v_\mathrm{sum}} T_n(j^n) \leq \COST \SCHE < \BARE - \varepsilon
\end{equation*}
holds. Therefore, we see 
\begin{align*}
\Pr(S_n^c) &\leq \Pr(\{j^n| \frac{1}{n\cdot v_\mathrm{sum}}T_n(j^n) < \BARE - \varepsilon\})\\
&= \Pr(\{j^n| \frac{1}{n\cdot v_\mathrm{sum}}T_n(j^n) < \UBARE - \varepsilon\})\\
&\to 0
\end{align*}
where the first equality follows from the assumption $\BARE = \UBARE$ and the convergence comes from the definition of $\UBARE$.
This shows $\Pr(S_n) \to 1$ and completes the proof.
\par
\paragraph{[Necessity]:} Assume that for any $\varepsilon > 0$ and any scheduling $\{\SCHE\}$ satisfying $\limsup_{n \to \infty} \frac{1}{n}\COST(\phi_n, S_n) \leq R(S_m) - \varepsilon$,
$\Pr(S_n) \to 1$ holds. Define a scheduler $\phi_n$ as in Lemma~\ref{Lem_sche_upper-bound} and define $S_n := \{j^n \mid \frac{1}{n\cdot v_\mathrm{sum}}T_n(j^n) \geq \BARE - \varepsilon\}$.
We first show the scheduler satisfies $\lim \frac{1}{n}\COST(\phi_n, S_n) \leq \BARE - \varepsilon$.
\par
By Lemma~\ref{Lem_sche_upper-bound}, we see $\SPAN(\phi_n, j^n) \leq \frac{1}{v_\mathrm{sum}}T_n(j^n) + \frac{T_\mathrm{max}}{v_\mathrm{min}}$ and therefore we have
\begin{equation*}
\frac{1}{n}\COST(\phi_n, S_n)
= \frac{1}{n} \max_{j^n \in S_n^c} \SPAN(\phi_n, j^n) \leq \max_{j^n \in S_n^c}\frac{1}{n\cdot v_\mathrm{sum}}T_n(j^n) + \frac{1}{n\cdot v_\mathrm{min}}T_\mathrm{max}.
\end{equation*}
Therefore, we obtain
\begin{equation*}
\frac{1}{n}\COST(\phi_n, S_n)
 \leq \max_{j^n \in S_n^c}\frac{1}{n\cdot v_\mathrm{sum}}T_n(j^n) + \frac{1}{n\cdot v_\mathrm{min}}T_\mathrm{max}
 \leq \BARE - \varepsilon + \frac{1}{n\cdot v_\mathrm{min}}T_\mathrm{max}
\end{equation*}
by the definition of $S_n$. Then, taking $\limsup_{n \to \infty}$ of both sides yields
\begin{equation*}
\limsup_{n \to \infty} \frac{1}{n}\COST(\phi_n, S_n)
 \leq \BARE - \varepsilon.
\end{equation*}

This in turn implies $\Pr(S_n) \to 1$ and equivalently, $\Pr(S^c_n) \to 0$ because of the assumption. Recall that $\UBARE$ is defined as
\begin{equation*}
\UBARE = \sup\left\{\beta \geq 0 \mid \lim_{n \to \infty} P_{\mathcal{J}^n}\{j^n \mid \frac{1}{n \cdot v_\mathrm{sum}}T_n(j^n) < \beta\} = 0\right\}.
\end{equation*}
Therefore, $\Pr(S^c_n) = \Pr(\{j^n \mid \frac{1}{n\cdot v_\mathrm{sum}}T_n(j^n) < \BARE - \varepsilon\}) \to 0$ means $\BARE - \varepsilon \leq \UBARE$ for any $\varepsilon > 0$.
This means $\BARE  \leq \UBARE$ and therefore we obtain $\BARE = \UBARE$ which completes proof.
\end{proof}

\end{document}